\theoremstyle{definition}
\newtheorem{definition}{Definition}[section]
\newtheorem{conjecture}{Conjecture}
\newtheorem{lemma}{Lemma}
\newtheorem{theorem}{Theorem}
\let\cite\citep
\newcommand{\mset}[1]{\left\{\!\!\left\{#1\right\}\!\!\right\}}
\tikzset{
	abstractNetwork/.style={font=\scriptsize, scale=0.7},
	vertexBase/.style={circle, minimum size=1em},
	vertex/.style={vertexBase, circle, draw=black},
	hyperEdge/.style={rectangle, draw},
	arrow/.style={semithick, -{Stealth[round]}},
	inputMol/.style={draw=ForestGreen, thick},
	outputMol/.style={draw=NavyBlue, thick},
	borrowMol/.style={draw=Purple, thick},
	inoutMol/.style={draw=Turquoise, thick}
}
\tikzset{modStyleDGHyperEdge/.style={hyperEdge}}
\newcommand\dpoDerivationRaw[6]{%
    \begin{tikzpicture}[gCat, every node/.style={inner sep=1pt, font=\scriptsize}, vCat/.append style={inner sep=.3333em}, node distance=1em and 1em]
       \node[vCat, label=below:$D$]                (D) {#5};
       \node[vCat, label=below:$G$, left=of D]     (G) {#4};
       \node[vCat, label=below:$H$, right=of D]    (H) {#6};
       \draw[eMorphism] (D) to (G);
       \draw[eMorphism] (D) to (H);
       \node[vCat, label=above:$K$, above=of D]    (K) {#2};
       \node[vCat, label=above:$L$, at=(K-|G)]     (L) {#1};
       \node[vCat, label=above:$R$, at=(K-|H)]     (R) {#3};
       \draw[eMorphism] (K) to (L);
       \draw[eMorphism] (K) to (R);

       \draw[eMorphism] (L) to (G);
       \draw[eMorphism] (K) to (D);
       \draw[eMorphism] (R) to (H);
   \end{tikzpicture}%
}
\newcommand\dpoDerivationTwo[3][]{%
	\dpoDerivationRaw%
		{\includegraphics[#1]{#2_derL.pdf}}	{\includegraphics[#1]{#2_derK.pdf}}	{\includegraphics[#1]{#2_derR.pdf}}%
		{\includegraphics[#1]{#3_derG.pdf}}	{\includegraphics[#1]{#3_derD.pdf}}	{\includegraphics[#1]{#3_derH.pdf}}%
}
  \def\doi#1{\url{https://doi.org/#1}}}
\providecommand{\keywords}[1]
{
  \textbf{\textit{Keywords---}} #1
}
\date{}
\title{Pathway Realisability in Chemical Networks}
\author[1]{Jakob L.\ Andersen}
\author[1$\ast$]{Sissel Banke}
\author[1]{Rolf Fagerberg}
\author[3,9]{Christoph Flamm}
\author[2,1]{Daniel Merkle}
\author[3-8]{Peter F.\  Stadler}
\affil[1]{Department of Mathematics and Computer Science, University of Southern Denmark, Odense M DK-5230, Denmark}
\affil[2]{Algorithmic Cheminformatics Group, Faculty of Technology, Bielefeld University, Bielefeld D-33615, Germany}
\affil[3]{Institute for Theoretical Chemistry, University of Vienna, Wien A-1090, Austria}
\affil[4]{Department of Computer Science, and Interdisciplinary Center for Bioinformatics, University of Leipzig, Leipzig D-04107, Germany}
\affil[5]{Max Planck Institute for Mathematics in the Sciences, Leipzig D-04103, Germany}
\affil[6]{Fraunhofer Institute for Cell Therapy and Immunology, Leipzig D-04103, Germany}
\affil[7]{Center for non-coding RNA in Technology and Health, University of Copenhagen, Frederiksberg C DK-1870, Denmark}
\affil[8]{Santa Fe Institute, 1399 Hyde Park Rd, Santa Fe NM 87501, USA}
\affil[9]{Research Network Chemistry Meets Microbiology, University of Vienna, Wien A-1090, Austria}
\affil[$\ast$]{Corresponding author. E-mail:banke@imada.sdu.dk}
\begin{document}
\maketitle
\keywords{Chemical reaction network, directed multi-hypergraph, cheminformatics, Petri net, pentose phosphate pathway}
\begin{abstract}
  The exploration of pathways and alternative pathways that have a
  specific function is of interest in numerous chemical contexts.  A
  framework for specifying and searching for pathways has previously
  been developed, but a focus on which of the many pathway solutions
  are realisable, or can be made realisable, is missing. Realisable
  here means that there actually exists some sequencing of the
  reactions of the pathway that will execute the pathway. We present a
  method for analysing the realisability of pathways based on the
  reachability question in Petri nets. For realisable pathways, our
  method also provides a certificate encoding an order of the
  reactions which realises the pathway. We present two extended
  notions of realisability of pathways, one of which is related to the
  concept of network catalysts.  We exemplify our findings on the
  pentose phosphate pathway.  Furthermore, we discuss the relevance of our
  concepts for elucidating the choices often implicitly made when
  depicting pathways. Lastly, we lay the foundation for the 
  mathematical theory of realisability.
\end{abstract}

\section{Introduction}

Large Chemical Reaction Networks (CRNs) are fundamental to numerous scientific, industrial, and societal challenges. Applications include the analysis of metabolic networks and their regulation in health and biotechnology; optimization of chemical synthesis processes; modelling of molecular ion fragmentation in mass spectrometry; investigation of hypotheses concerning the origins of life; and environmental monitoring of pollutants. Subnetworks with specific properties, often referred to as \emph{pathways}---such as synthetic routes to target molecules or metabolic subsystems---are of particular interest. Thus, the ability to define and identify pathways within a CRN is a central objective in chemical modelling, exploration, and design.

CRNs can be modelled as directed hypergraphs~\cite{Zeigarnik:00,Muller:22,Andersen:20,flow}, where vertices represent molecules and directed hyperedges represent reactions. By considering pathways in CRNs as sets of reactions with integer multiplicities, \cite{flow} formally defined pathways as integer hyperflows in hypergraphs. The integer hyperflow model for pathways is analogous to flux balance analysis (FBA), another method for pathway discovery. Both approaches enforce mass conservation and typically employ linear constraints to identify pathways. However, they differ in several respects; see \cite{flow} for a detailed discussion. Notably, FBA yields flux distributions, whereas integer hyperflows provide pathways as sets of reactions with integer stoichiometric coefficients, facilitating a mechanistic understanding of the pathway. Additionally, \cite{flow} introduced the concept of a \emph{chemical transformation motif} in a CRN, offering a flexible framework for querying reaction networks for pathways. A chemical transformation motif specifies a pathway by prescribing the input and output compounds, allowing intermediate products that must be consumed entirely. Computationally, finding and enumerating pathways that fulfil a chemical transformation motif can be addressed via Integer Linear Programming (ILP)~\cite{flow}. Although ILP is NP-hard in general and even in the restricted context of finding integer hyperflows in CRNs~\cite{NPflow}, current ILP solvers perform well for many practically relevant networks and pathways~\cite{flow}.

While integer hyperflows specify reactions and their multiplicities, they do not determine the sequence in which these reactions occur to achieve the overall chemical transformation. Indeed, a valid sequencing may not exist. Figure~\ref{fig:missing_catalyst} illustrates such a scenario: no ordering of reactions $e_1$ and $e_2$ in the integer hyperflow renders it executable---essentially, molecules $C$ or $D$ must be present prior to their production. We introduce the term \emph{realisable} for integer hyperflows where the corresponding chemical transformation can be executed by some sequence of constituent reactions. To address this, we develop a framework that converts integer hyperflows into corresponding Petri nets, enabling the application of Petri net methodologies to express and determine the realisability of integer hyperflows. Petri nets have been extensively employed to model various aspects of metabolic networks~\cite{Baldan:10}.

For realisable integer hyperflows, we introduce the concept of a \emph{realisability certificate}, which specifies an execution order for the reactions along the pathway. Determining an explicit sequence not only enhances mechanistic understanding but is also essential for studies where individual atom identities are important, such as computing atom traces~\cite{trace}. We also explore methods to extend non-realisable integer hyperflows into realisable ones. One approach involves scaling the integer hyperflow, while another entails borrowing additional molecules that are subsequently returned. This latter method is closely related to the concept of a ``network catalyst'' (see e.g.\ \cite{Braakman:13,Morowitz:08}). An algorithmic approach to deciding realisability through borrowing thus serves as a crucial foundation for future computational treatments of higher-level chemical motifs like autocatalysis and hypercycles~\cite{hypercycle,hypercycleA,Szathmary:1988,Szathmary:2013}. Finally, we apply our methodology to the non-oxidative phase of the pentose phosphate pathway (PPP) to demonstrate its utility and to explore potential catalysts within the network. The PPP is a well-known example that underscores the importance of simplicity in solution finding~\cite{Noor2010,Melendez-Hevia:1985}.

The primary focus of our paper is the formal definition and exploration of the realisability of pathways. It is noteworthy that conventional representations of pathways in the life sciences literature often reside between the two extremes of integer hyperflows and realisability certificates. We believe that our formalisation of these concepts can raise awareness of the implicit choices made when depicting pathways. This perspective is further elaborated in Section~\ref{sec:pathwayrep}.

The remainder of this paper is organised as follows. Section~\ref{sec:preliminaries} presents the notation and definitions for directed hypergraphs, integer hyperflows, and Petri nets, with terminology following~\cite{flow}. Section~\ref{sec:realisability} defines the realisability problem, outlines our method for converting integer hyperflows into Petri nets, and introduces realisability certificates. In Section~\ref{sec:extended_real}, we investigate methods for rendering non-realisable integer hyperflows realisable, either by scaling the hyperflow or by borrowing molecules. Section~\ref{sec:pathwayrep} discusses the implications of integer hyperflows and realisability certificates in pathway depiction. Section~\ref{sec:math_realisability} examines the mathematical properties of pathway realisability.

\section{Preliminaries}\label{sec:preliminaries}
\subsection{Chemical Reaction Networks and Pathways}
\label{subsec:flowpaperrecap}

In this paper we follow \cite{flow} and model CRNs as directed hypergraphs. A directed hypergraph $\mathcal{H} = (V, E)$ has a set $V$ of vertices representing the molecules.
Reactions are represented as directed hyperedges $E$, where each edge
$e = (e^+, e^-)$ is an ordered pair of multisets of vertices, i.e.,
$e^+, e^-\subseteq V$.\footnote{When comparing a multiset~$M$ and a
  set~$S$, we view~$M$ as a set. I.e., $M \subseteq S$ holds if every
  element in $M$ is an element of~$S$.} We call $e^+$ the \emph{tail}
of the edge $e$, and $e^-$ the \emph{head}.  In the interest of
conciseness we will refer to directed hypergraphs simply as
hypergraphs, directed hyperedges simply as edges, and CRNs as
networks.  For a multiset $Q$ and an element $q$ we use $m_q(Q)$ to
denote its multiplicity, i.e., the number of occurrences of $q$ in
$Q$.  When denoting multisets we use the notation $\mset{\dots}$,
e.g., $Q = \mset{a, a, b}$ is a multiset with $m_a(Q) = 2$ and
$m_b(Q) = 1$.  For a vertex $v\in V$ and a set of edges $A$ we use
$\delta^+_A(v)$ and $\delta^-_A(v)$ to denote respectively the set of
out-edges and in-edges of $v$ contained in $A$, i.e., the edges in $A$
that have $v$ in their tail and $v$ in their head, respectively.
We note that hypergraph modelling is equivalent to the more common modelling via a bipartite species-reactions graph \cite{fagerberg:13}. Fig.~\ref{fig:bipartite_hypergraph} shows a directed hypergraph in (a) and its equivalent bipartite graph in (b). The hypergraph modelling can be said to provide a sligthly stronger distinction
between molecules and reactions and it forms the basis of
the modelling of hyperflows in \cite{flow}, on which we build.

\begin{figure}[tbp]
  \centering
  \begin{subfigure}[b]{0.49\textwidth}
  \centering
  \begin{tikzpicture}[abstractNetwork]
    \node[vertex] (B)  at (0,-2)   {$B$} ;
    \node[vertex] (C)  at (4,-2)   {$C$} ;
    \node[vertex] (A)  at (0,-4)  {$A$} ;
    \node[vertex] (D)  at (-4,-2)  {$D$} ;
    \node[hyperEdge, label=below:{$e_1$}] (e1) at (2,-2) {};
    \node[hyperEdge, label=below:{$e_2$}] (e2) at (4,-4) {};
    \node[hyperEdge, label=above:{$e_3$}] (e3) at (-2,-2) {};
    \node[hyperEdge, label=right:{$\strut e_4$}] (e4) at (0,-3) {};
    
    \draw[arrow] (B) -- (e1);
    \draw[arrow] (e1) -- (C);
    \draw[arrow] (C) -- (e2);
    \draw[arrow] (e2) -- (A);
    \draw[arrow] (e2) -- (B);
    \draw[arrow] (D) -- (e3);
    \draw[arrow] (e3) to [bend right=15] (B);
    \draw[arrow] (e3) to [bend left=15] (B);
    \draw[arrow] (D) to (e4);
    \draw[arrow] (B) to (e4);
    \draw[arrow] (e4) to (A);
  \end{tikzpicture}%
  \caption{A directed hypergraph.}
  \end{subfigure}
  \hfill
  \begin{subfigure}[b]{0.49\textwidth}
  \centering
  \begin{tikzpicture}[abstractNetwork, node distance = 0.5cm and 1cm]
  \node[vertex] (D) {$D$} ;
    \node[vertex, below = of D] (B) {$B$} ;
    \node[vertex, below = of B] (A) {$A$} ;
    \node[vertex, below = of A] (C) {$C$} ;
    \node[vertex, right = of A] (e1) {$e_1$};
    \node[vertex, right = of C] (e2) {$e_2$};
    \node[vertex, right = of D] (e3) {$e_3$};
    \node[vertex, right = of B] (e4) {$e_4$};
    
    \draw[arrow] (B) -- (e1);
    \draw[arrow] (e1) -- (C);
    \draw[arrow] (C) -- (e2);
    \draw[arrow] (e2) -- (A);
    \draw[arrow] (e2) -- (B);
    \draw[arrow] (D) -- (e3);
    \draw[arrow] (e3) to [bend right=15] (B);
    \draw[arrow] (e3) to [bend left=15] (B);
    \draw[arrow] (D) to (e4);
    \draw[arrow] (B) to (e4);
    \draw[arrow] (e4) to (A);
  \end{tikzpicture}%
  \caption{A bipartite graph.}
  \end{subfigure}%
  \caption{A directed hypergraph in (a) and the corresponding bipartite graph in (b).}
  \label{fig:bipartite_hypergraph}
\end{figure}

To model pathways \cite{flow} 
defines the \emph{extended hypergraph}.
Given a hypergraph
$\mathcal{H} = (V, E)$  the extended hypergraph is
$\overline{\mathcal{H}}=(V,\overline{E})$ with $\overline{E} = E \cup E^-
\cup E^+$, where
\begin{equation}
  E^- = \{e^-_v = (\emptyset, \mset{v}) \mid v\in V\} \qquad
  E^+ = \{e^+_v = (\mset{v}, \emptyset) \mid v\in V\} 
  \label{eq:extendedGraph}
\end{equation}
The hypergraph $\overline{\mathcal{H}}$ has additional ``half-edges''
$e^-_v$ and $e^+_v$, for each $v\in V$.  These explicitly represent
potential input and output channels to and from $\mathcal{H}$, i.e., what
is called exchange reactions in metabolic networks.  An example of an
extended hypergraph is shown in Fig. \ref{fig:hypergraph}.%

\begin{figure}[tbp]
\centering
  \begin{tikzpicture}[abstractNetwork]
    \node[vertex] (B)  at (0,-2)   {$B$} ;
    \node[vertex] (C)  at (4,-2)   {$C$} ;
    \node[vertex] (A)  at (0,-4)  {$A$} ;
    \node[vertex] (D)  at (-4,-2)  {$D$} ;
    
    \node[vertexBase, overlay] (B1) at (0,0) {\phantom{$B$}};
    \node[vertexBase, overlay] (A1) at (-2,-4) {\phantom{$A$}};
    \node[vertexBase, overlay] (C1) at (6,-2) {\phantom{$C$}};
    \node[vertexBase, overlay] (D1) at (-6,-2) {\phantom{$D$}};
    \node[hyperEdge, label=below:{$e_1$}] (e1) at (2,-2) {};
    \node[hyperEdge, label=below:{$e_2$}] (e2) at (4,-4) {};
    \node[hyperEdge, label=above:{$e_3$}] (e3) at (-2,-2) {};
    \node[hyperEdge, label=right:{$\strut e_4$}] (e4) at (0,-3) {};
    
    \draw[arrow] (B) -- (e1);
    \draw[arrow] (e1) -- (C);
    \draw[arrow] (C) -- (e2);
    \draw[arrow] (e2) -- (A);
    \draw[arrow] (e2) -- (B);
    \draw[arrow] (D) -- (e3);
    \draw[arrow] (e3) to [bend right=15] (B);
    \draw[arrow] (e3) to [bend left=15] (B);
    \draw[arrow] (D) to (e4);
    \draw[arrow] (B) to (e4);
    \draw[arrow] (e4) to (A);
    
    \draw[arrow] (B.110) to  (B1.250);
    \draw[arrow] (B1.290) to (B.70);
		
    \draw[arrow] (A.200) to (A1.340);
    \draw[arrow] (A1.20) to (A.160);
    
    \draw[arrow] (D.200) to (D1.340);
    \draw[arrow] (D1.20) to (D.160);
    
    \draw[arrow] (C1.200) to (C.340);
    \draw[arrow] (C.20) to (C1.160);
  \end{tikzpicture}%
  \caption{Example of an extended hypergraph.  It has vertices
    $\{A,B,C,D\}$, edges $\{e_1,e_2,e_3,e_4\}$, and a half-edge to and
    from each vertex. An edge~$e$ is represented by a box with arrows
    to (from) each element in $e^-$ ($e^+$).}
  \label{fig:hypergraph}%
\end{figure}

An \emph{integer hyperflow} is 
an integer-valued function~$f$ on the extended network,
$f \colon \overline{E}\rightarrow \mathbb{N}_0$, which satisfies the
following \emph{flow conservation constraint} on each vertex $v\in V$:
\begin{align}\label{eq:flow_conservation}
  \sum_{e\in \delta^+_{\overline{E}}(v)} m_v(e^+)f(e) - 
  \sum_{e\in \delta^-_{\overline{E}}(v)} m_v(e^-)f(e) = 0
\end{align}

Note in particular that $f(e^-_v)$ is the input flow for vertex $v$
and $f(e^+_v)$ is its output flow. We will for the remainder of the 
paper refer to integer hyperflows simply as flows.
An example of a flow is shown in Fig. \ref{fig:hyperflow}.%

\begin{figure}[tbp]
  \centering
  \begin{tikzpicture}[abstractNetwork]
    \node[vertex] (B)  at (0,-2)   {$B$} ;
    \node[vertex] (C)  at (4,-2)   {$C$} ;
    \node[vertex] (A)  at (0,-4)  {$A$} ;
    
    \node[vertexBase, overlay] (B1) at (-2,-2) {\phantom{$B$}};
    \node[vertexBase, overlay] (A1) at (-2,-4) {\phantom{$A$}};
    \node[vertexBase, overlay] (C1) at (6,-2) {\phantom{$C$}};
    \node[hyperEdge, label=below:{$e_1$}] (e3) at (2,-2) {$2$};
    \node[hyperEdge, label=below:{$e_2$}] (e) at (4,-4) {$1$};
    
    \draw[arrow] (B) -- (e3);
    \draw[arrow] (e3) -- (C);
    \draw[arrow] (C) -- (e);
    \draw[arrow] (e) -- (A);
    \draw[arrow] (e) -- (B);
    
    \draw[arrow] (B.200) --  node[below] {$1$}(B1.340);
    \draw[arrow] (B1.20) -- node[above] {$2$}(B.160);
    
    \draw[arrow] (A.200) -- node[below] {$1$}(A1.340);
    \draw[arrow] (A1.20) -- node[above] {$0$}(A.160);
    
    \draw[arrow] (C1.200) --node[below] {$0$} (C.340);
    \draw[arrow] (C.20) -- node[above] {$1$}(C1.160);
  \end{tikzpicture}%
  \caption{Example flow $f$ on the extended hypergraph
    from Fig.~\ref{fig:hypergraph}.  Vertex $D$ has been omitted as it
    has no in- or out-flow.  Edges leaving or entering $D$ have also
    been omitted as they have no flow.  The flow on an edge is
    represented by an integer.  For example, the half edge into $B$
    has flow $f(e_B^-)=2$, the half edge leaving $B$ has flow
    $f(e_B^+)=1$, and edge $e_1$ has flow $f(e_1)=2$.}
  \label{fig:hyperflow}%
\end{figure}

\subsection{Petri Nets}
Petri nets are an alternative method to analyse CRNs.  Each molecular species
in the network forms a \emph{place} in the Petri net and each reaction
corresponds to a transition \cite{Koch:2010a,Reddy:93,Reddy:96}. The
stoichiometric matrix commonly used in chemistry has an equivalent in
Petri net terminology, called the incidence matrix~\cite{Koch:2010a}.
In Section~\ref{sec:realisability} we will describe a transformation
of a flow to a Petri net. The following notation for Petri nets
(with the exception of arc weights) follows \cite{esparza:98}.

A \emph{net} is a triple $(P,T,W)$ with a set of places $P$, a set of
transitions $T$, and an arc weight function
$W\colon (P\times T) \cup (T\times P) \rightarrow \mathbb{N}_0$. A
\emph{marking} on a net is a function
$M\colon P\rightarrow \mathbb{N}_0$ assigning a number of tokens to
each place.  With $M_\emptyset$ we denote the empty marking, i.e.,
$M_\emptyset(p) = 0,\: \forall p\in P$.  A \emph{Petri net} is a pair
$(N, M_0)$ of a net $N$ and an initial marking $M_0$. For all
$x \in P \cup T$, we define the \emph{pre-set} as
$^\bullet x = \{y \in P \cup T \mid W(y,x) > 0\}$ and the
\emph{post-set} as $x^\bullet = \{y \in P \cup T \mid W(x,y) >
0\}$. We say that a transition $t$ is enabled by the marking $M$ if
$W(p,t) \leq M(p), \forall p\in P$. When a transition $t$ is enabled
it can \emph{fire}, resulting in a marking $M'$ where
$M'(p)=M(p)-W(p,t)+W(t,p), \: \forall p\in P$. Such a firing is
denoted by $M\xrightarrow{t} M'$. A \emph{firing sequence}~$\sigma$ is
a sequence of firing transitions $\sigma = t_1 t_2 \ldots t_n$. Such a
firing sequence gives rise to a sequence of markings
$M_0 \xrightarrow{t_1} M_1 \xrightarrow{t_2} M_2 \xrightarrow{t_3}
\ldots \xrightarrow{t_n} M_n$ which is denoted by
$M_0 \xrightarrow{\sigma} M_n$. 
In Fig.~\ref{fig:firing_sequence} we present an example of a firing sequence which in this instance is the sequence $\sigma=t_1t_2t_3$.%

\begin{figure}[tbp]
  \centering
  \begin{subfigure}[b]{\textwidth}
  \centering
  \begin{tikzpicture}[abstractNetwork, node distance=2em and 3em,
      vertexPetri/.style={vertex, minimum size=2em},
      hyperEdgePetri/.style={hyperEdge, inner sep=0, minimum width=1.5em, minimum height=1ex},
      token/.style={circle, draw, fill, minimum size=1ex, inner sep=0}
    ]
    \node[vertexPetri, label=left:$p_1$] (p1) {};
    \node[hyperEdgePetri, below right = 1em and 3 em of p1, label = below : $t_1$] (t1) {};
    \node[vertexPetri, below=of p1, label = left :$p_2$] (p2) {};
    \node[vertexPetri, right=of t1, label = below:$p_3$] (p3) {};
    \node[hyperEdgePetri, right=of p3, label=below:$t_2$] (t2) {};
    \node[vertexPetri, right=of t2, label = below : $p_4$] (p4) {};
    \node[hyperEdgePetri, right=of p4, label = below:$t_3$] (t3) {};
    \node[vertexPetri, right=of t3, label = below : $p_5$] (p5) {};
    
    \draw[arrow] (p1) -- (t1);
    \draw[arrow] (p2) -- (t1);
    \draw[arrow] (t1) -- (p3);
    \draw[arrow] (p3) -- (t2);
    \draw[arrow] (t2) -- (p4);
    \draw[arrow] (p4) -- (t3);
    \draw[arrow] (t3.north) [bend right = 10] to (p1);
    \draw[arrow] (t3) -- (p5);

    \node[token] at (p1) {};
    \node[token] at (p2) {};
  \end{tikzpicture}%
  \subcaption{}
  \end{subfigure}
  \begin{subfigure}[b]{\textwidth}
  \centering
  \begin{tikzpicture}[abstractNetwork, node distance=2em and 3em,
      vertexPetri/.style={vertex, minimum size=2em},
      hyperEdgePetri/.style={hyperEdge, inner sep=0, minimum width=1.5em, minimum height=1ex},
      token/.style={circle, draw, fill, minimum size=1ex, inner sep=0}
    ]
    \node[vertexPetri, label=left:$p_1$] (p1) {};
    \node[hyperEdgePetri, below right = 1em and 3 em of p1, label = below : $t_1$] (t1) {};
    \node[vertexPetri, below=of p1, label = left :$p_2$] (p2) {};
    \node[vertexPetri, right=of t1, label = below:$p_3$] (p3) {};
    \node[hyperEdgePetri, right=of p3, label=below:$t_2$] (t2) {};
    \node[vertexPetri, right=of t2, label = below : $p_4$] (p4) {};
    \node[hyperEdgePetri, right=of p4, label = below:$t_3$] (t3) {};
    \node[vertexPetri, right=of t3, label = below : $p_5$] (p5) {};
    
    \draw[arrow] (p1) -- (t1);
    \draw[arrow] (p2) -- (t1);
    \draw[arrow] (t1) -- (p3);
    \draw[arrow] (p3) -- (t2);
    \draw[arrow] (t2) -- (p4);
    \draw[arrow] (p4) -- (t3);
    \draw[arrow] (t3.north) [bend right = 10] to (p1);
    \draw[arrow] (t3) -- (p5);

    \node[token] at (p3) {};
  \end{tikzpicture}%
  \subcaption{}
  \end{subfigure}
  \begin{subfigure}[b]{\textwidth}
  \centering
  \begin{tikzpicture}[abstractNetwork, node distance=2em and 3em,
      vertexPetri/.style={vertex, minimum size=2em},
      hyperEdgePetri/.style={hyperEdge, inner sep=0, minimum width=1.5em, minimum height=1ex},
      token/.style={circle, draw, fill, minimum size=1ex, inner sep=0}
    ]
    \node[vertexPetri, label=left:$p_1$] (p1) {};
    \node[hyperEdgePetri, below right = 1em and 3 em of p1, label = below : $t_1$] (t1) {};
    \node[vertexPetri, below=of p1, label = left :$p_2$] (p2) {};
    \node[vertexPetri, right=of t1, label = below:$p_3$] (p3) {};
    \node[hyperEdgePetri, right=of p3, label=below:$t_2$] (t2) {};
    \node[vertexPetri, right=of t2, label = below : $p_4$] (p4) {};
    \node[hyperEdgePetri, right=of p4, label = below:$t_3$] (t3) {};
    \node[vertexPetri, right=of t3, label = below : $p_5$] (p5) {};
    
    \draw[arrow] (p1) -- (t1);
    \draw[arrow] (p2) -- (t1);
    \draw[arrow] (t1) -- (p3);
    \draw[arrow] (p3) -- (t2);
    \draw[arrow] (t2) -- (p4);
    \draw[arrow] (p4) -- (t3);
    \draw[arrow] (t3.north) [bend right = 10] to (p1);
    \draw[arrow] (t3) -- (p5);

    \node[token] at (p4) {};
  \end{tikzpicture}%
  \subcaption{}
  \end{subfigure}
  \begin{subfigure}[b]{\textwidth}
  \centering
  \begin{tikzpicture}[abstractNetwork, node distance=2em and 3em,
      vertexPetri/.style={vertex, minimum size=2em},
      hyperEdgePetri/.style={hyperEdge, inner sep=0, minimum width=1.5em, minimum height=1ex},
      token/.style={circle, draw, fill, minimum size=1ex, inner sep=0}
    ]
    \node[vertexPetri, label=left:$p_1$] (p1) {};
    \node[hyperEdgePetri, below right = 1em and 3 em of p1, label = below : $t_1$] (t1) {};
    \node[vertexPetri, below=of p1, label = left :$p_2$] (p2) {};
    \node[vertexPetri, right=of t1, label = below:$p_3$] (p3) {};
    \node[hyperEdgePetri, right=of p3, label=below:$t_2$] (t2) {};
    \node[vertexPetri, right=of t2, label = below : $p_4$] (p4) {};
    \node[hyperEdgePetri, right=of p4, label = below:$t_3$] (t3) {};
    \node[vertexPetri, right=of t3, label = below : $p_5$] (p5) {};
    
    \draw[arrow] (p1) -- (t1);
    \draw[arrow] (p2) -- (t1);
    \draw[arrow] (t1) -- (p3);
    \draw[arrow] (p3) -- (t2);
    \draw[arrow] (t2) -- (p4);
    \draw[arrow] (p4) -- (t3);
    \draw[arrow] (t3.north) [bend right = 10] to (p1);
    \draw[arrow] (t3) -- (p5);

    \node[token] at (p5) {};
    \node[token] at (p1) {};
  \end{tikzpicture}%
  \subcaption{}
  \end{subfigure}
  \caption{Example firing sequence. Here $P=\{p_1,p_2,p_3,p_4,p_5\}$, $T=\{t_1,t_2,t_3\}$, $W=\{(p_1,t_1)\mapsto 1,(p_2,t_1) \mapsto 1, (t_1,p_3) \mapsto 1,(p_3,t_2)\mapsto 1,(t_2,p_4) \mapsto 1,(p_4,t_3) \mapsto 1,(t_3,p_5) \mapsto 1,(t_3,p_1)  \mapsto 1\}$, and the initial marking $M_0=\{p_1\mapsto1, p_2\mapsto 1, p_3\mapsto 0, p_4\mapsto 0, p_5\mapsto 0\}$ which is depicted in (a). The firing sequence that leads to (d) is $\sigma=t_1t_2t_3$, which is illustrated through (a) to (d).}
\label{fig:firing_sequence}%
\end{figure}

\section{Realisability of Flows}
\label{sec:realisability}

\citet{flow} described a method (summarized
in~Section~\ref{subsec:flowpaperrecap}) to specify pathways in
CRNs and then proceeded to use ILP to enumerate pathway solutions
fulfilling the specification.
In this paper, we focus on assessing the realisability of such a
pathway solution and on determining a specific order of reactions that
proves its realisability. To this end, we map flows into
Petri nets and rephrase the question of realisability as a particular
reachability question in the resulting Petri net.

\subsection{Flows as Petri Nets}\label{sec:flows_as_nets}
We convert a hypergraph $\mathcal{H} = (V, E)$ to a net
$N = (P, T, W)$ by using the vertices $V$ as the places $P$ and the
edges $E$ as the transitions $T$, and by defining the weight function
from the incidence information as follows: for each vertex/place
$v\in V$ and edge/transition $e = (e^+, e^-) \in E$ let
$W(v, e) = m_v(e^+)$ and $W(e, v) = m_v(e^-)$.  This conversion also
works for extended hypergraphs, where the half-edges result in
transitions with either an empty pre-set or post-set. The transitions
corresponding to input reactions are thus always enabled. 

Given a flow, we would like to constrain the Petri net such that it
yields only 
firing sequences for that particular flow.  We therefore further
convert the extended hypergraph $\overline{\mathcal{H}}$ into an
extended net $(V\cup V_E\cup V_T, \overline{E}, W\cup W_E)$ by adding for each
edge $e\in\overline{E}$ an ``external place'' $v_e\in V_E$ with
connectivity $W(v_e,e)=1$ and for each edge $e^+\in E^+$ adding 
a ``target place''
$v_{e^+} \in V_T$ with connectivity $W(e^+, v_{e^+})=1$.  In the following, we will denote the
extended Petri net again by $N$. We then proceed by translating the
given flow $f$ of $\overline{\mathcal{H}}$ into an initial marking
$M_0$ on the extended net.  To this end, we set $M_0(v)=0$ for
$v\in V\cup V_T$ and $M_0(v_e)=f(e)$ for places $v_e\in V_E$. 
Additionally, 
we set the target marking denoted by $M_T$ to $M_T(v)=0$ for $v\in V\cup V_E$ 
and $M_T(v_{e^+})= f(e^+)$ for places $v_{e^+} \in V_T$.

Transitions in
$(N,M_0)$ therefore can fire at most the number of times specified by
the flow.  Furthermore, any firing sequence
$M_0\xrightarrow{\sigma} M_T$ ending in the target marking must
use each transition exactly the number of times specified by the flow.
As an example, the flow in Fig.~\ref{fig:hyperflow} is converted
to the Petri net in Fig.~\ref{fig:petrinet}.%

\begin{figure}[tbp]
  \centering
  \begin{tikzpicture}[abstractNetwork, node distance=2em and 3em,
      vertexPetri/.style={vertex, minimum size=2em},
      hyperEdgePetri/.style={hyperEdge, inner sep=0, minimum width=1.5em, minimum height=1ex},
      token/.style={circle, draw, fill, minimum size=1ex, inner sep=0}
    ]
    \node[vertexPetri, label=above left:$B$] (B) {};
    \node[hyperEdgePetri, above=of B, label=left:$t_{B_{in}}$] (B1) {};
    \node[vertexPetri, above=of B1, label=above:$p_{B_{in}}$] (B3) {};
    \node[hyperEdgePetri, left=of B, label=below:$t_{B_{out}}$] (B2) {};
    \node[vertexPetri, left=of B2, label=above:$p_{B_{out}}$] (B4) {};
    \node[vertexPetri, above = of B2, label=above:$B_T$] (BG) {};
    \node[hyperEdgePetri, right=of B, label=below:{$e_1$}] (e3) {};
    \node[vertexPetri, above=of e3, label=above:$e_{1_{in}}$] (E3) {};
    \node[vertexPetri, right= of e3, label=above left:$C$] (C) {};
    \node[hyperEdgePetri, above=of C, label=left:$t_{C_{in}}$] (C1) {};
    \node[vertexPetri, above=of C1, label=above:$p_{C_{in}}$] (C3) {};
    \node[hyperEdgePetri, right=of C, label=below:$t_{C_{out}}$] (C2) {};
    \node[vertexPetri, right=of C2, label=above:$p_{C_{out}}$] (C4) {};
    \node[vertexPetri, above = of C2, label=above:$C_T$] (CG) {};
    \node[hyperEdgePetri, below=of C, label=below:{$e_2$}] (e) {};
    \node[vertexPetri, right=of e, label=below:$e_{2_{in}}$] (E1) {};
    \node[vertexPetri, label=below left:$A$] at ($(B |- e)$) (A) {};
    \node[hyperEdgePetri, below=of A, label=below:$t_{A_{in}}$] (A1) {};
    \node[vertexPetri, right=of A1, label=below:$p_{A_{in}}$] (A3) {};
    \node[hyperEdgePetri, left=of A, label=above:$t_{A_{out}}$] (A2) {};
    \node[vertexPetri, left=of A2, label=below:$p_{A_{out}}$] (A4) {};
    \node[vertexPetri, below = of A2, label=below:$A_T$] (AG) {};
    
    \draw[arrow] (B) -- (e3);
    \draw[arrow] (e3) -- (C);
    \draw[arrow] (C) -- (e);
    \draw[arrow] (e) -- (A);
    \draw[arrow] (e) -- (B);
    \draw[arrow] (E3) -- (e3);
    \draw[arrow] (E1) -- (e);
    
    \draw[arrow] (B1) -- (B);
    \draw[arrow] (B) -- (B2);
    
    \draw[arrow] (A1) -- (A);
    \draw[arrow] (A) -- (A2);
    
    \draw[arrow] (C1) -- (C);
    \draw[arrow] (C) -- (C2);
    
    \draw[arrow] (B3) -- (B1);
    \draw[arrow] (B4) -- (B2);
    
    \draw[arrow] (A3) -- (A1);
    \draw[arrow] (A4) -- (A2);
    
    \draw[arrow] (C3) -- (C1);
    \draw[arrow] (C4) -- (C2);
    
    \draw[arrow] (B2) --(BG);
    \draw[arrow] (C2) --(CG);
    \draw[arrow] (A2) --(AG);

    \node[token] at (C4) {};
    \node[token, xshift=-0.7ex] at (E3) {};
    \node[token, xshift=0.7ex] at (E3) {};
    \node[token] at (E1) {};
    \node[token, xshift=0.7ex] at (B3) {};
    \node[token, xshift=-0.7ex] at (B3) {};
    \node[token] at (A4) {};
    \node[token] at (B4) {};
  \end{tikzpicture}%
  \caption{The flow from Fig.~\ref{fig:hyperflow}
    converted to a Petri net with its initial marking. Places are circles, transitions are
    rectangles, and tokens are black dots. Arrows indicate pairs of
    places and transitions for which the weight function~$W$ is
    non-zero (in this example, all non-zero weights are equal to
    one). The target marking is $M_T(A_T)=1$, $M_T(B_T)=1$, $M_T(C_T)=1$ and $M_T(p)=0$ for all $p\in P \setminus \{A_T, B_T, C_T\}$.We have omitted the part of the net that corresponds to the
    omitted part of Fig.~\ref{fig:hyperflow}.}
  \label{fig:petrinet}%
\end{figure}

\subsection{Realisability of Flows}
We are interested in whether a given pathway, represented by a flow $f$ on
an extended hypergraph $\overline{\mathcal{H}} = (V,\overline{E})$, is
realisable in the following sense:
Given the input molecules specified by the input flow,
is there a sequence of reactions that respects the flow,
which in the end produces the output molecules specified by the output flow?
In the light of the construction of $(N,M_0)$
  from $(\overline{\mathcal{H}},f)$, this question translates into a
  reachability problem on a Petri net.
\begin{definition} 
  A flow $f$ on $\overline{\mathcal{H}}$ is realisable if there is a firing
  sequence $M_0\xrightarrow{*} M_T$ on the Petri net $(N,M_0)$
  constructed from $(\overline{\mathcal{H}},f)$.
  \label{def:realisability}
\end{definition}

\begin{figure}[tbp]
  \centering
  \begin{tikzpicture}[abstractNetwork]
    \node[vertex] (A)  at (0,0)   {$A$} ;
    \node[vertex] (B)  at (4,0)   {$B$} ;
    \node[vertex] (C)  at (0,2)  {$C$} ;
    \node[vertex] (D)  at (4,2)  {$D$} ;
    
    \node[vertexBase, overlay] (A1) at (-2,0) {\phantom{$A$}};
    \node[vertexBase, overlay] (B1) at (6,0) {\phantom{$B$}};
    \node[vertexBase, overlay] (C1) at (-2,2) {\phantom{$C$}};
    \node[vertexBase, overlay] (D1) at (6,2) {\phantom{$D$}};
    \node[hyperEdge, label=below:{$e_1$}] (e1) at (2,0) {$1$};
    \node[hyperEdge, label=below:{$e_2$}] (e2) at (2,4) {$1$};
    
    \draw[arrow] (A) -- (e1);
    \draw[arrow] (e1) -- (B);
    \draw[arrow] (e2) to [bend right=30] (C);
    \draw[arrow] (C) to [bend right=30] (e1);
    \draw[arrow] (D) to [bend right=30] (e2);
    \draw[arrow] (e1) to [bend right=30] (D);		

    \draw[arrow] (B.20) -- node[above] {$1$}  (B1.160);
    \draw[arrow] (B1.200) -- node[below] {$0$} (B.340);
    
    \draw[arrow] (A.200) -- node[below] {$0$} (A1.340);
    \draw[arrow] (A1.20) -- node[above] {$1$} (A.160);
    
    \draw[arrow] (D.20) -- node[above] {$0$} (D1.160);
    \draw[arrow] (D1.200)-- node[below] {$0$} (D.340);
    
    \draw[arrow] (C1.20) -- node[above] {$0$} (C.160);
    \draw[arrow] (C.200)-- node[below] {$0$} (C1.340);
  \end{tikzpicture}%
  \caption{Example of a flow which is not realisable.
    Observe that the flow is indeed viable as it fulfils the
    flow conservation constraint. Furthermore, notice that there is no
    input flow to neither $C$ nor $D$, and therefore in the
    corresponding Petri net it will not be possible to fire either of
    $e_1$ or $e_2$ which is necessary for it to be realised. However,
    if $C$ or $D$ was borrowed the related flow with this borrowing
    would be realisable.}
  \label{fig:missing_catalyst}%
\end{figure}

Fig.~\ref{fig:missing_catalyst} shows that not all flows $f$ on $\overline{\mathcal{H}}$ are realisable. In this example it is impossible
to realise the flow as long as there is no flow entering either $C$ or $D$.
For the flow in Fig. \ref{fig:hyperflow}, on the other hand, such a firing
sequence exists.  The firing sequences corresponding to a realisable flow
are not unique in general.  For instance, the Petri net constructed from
the flow presented in Fig.~\ref{fig:petrinet} can reach the
target marking $M_T$ in essentially two different manners.  Modulo
the firing of input/output transitions, those two firing subsequences are
$e_1e_1e_2$ and $e_1e_2e_1$.
For a chemical example of a realisable flow see Fig. \ref{fig:realisable_flow_formose}. This is a flow for the formose reaction.%

\begin{figure}[tbp]
\centering
\resizebox{\textwidth}{!}{%
  \renewcommand\modInputPrefix{data/formose_realisable_flow}%
  \input{data/formose_realisable_flow/out/018_dg_0_01100_f_0_0_filt.tex}%
  }%
\caption{An example of a flow for the formose reaction which is realisable. The input compound Formald is marked with green and Glycoald which is both an input and output compound is marked with turquoise.}
\label{fig:realisable_flow_formose}
\end{figure}

\subsection{Realisability Certificate}
In order to introduce realisability certificates that describe
the causal order of the reactions needed to make the pathway realisable,
we need some established terminology.
\begin{definition}[Occurrence Net \cite{goltz:83}]\label{def:occ_net}
  A net $K=(P_K,T_K,F_K)$ with $F_K \subseteq (P_K \times T_K) \cup (T_K \times P_K)$ is an
  occurrence net iff 
  \begin{enumerate}
  \item $\forall x, y \in P_K \cup T_K \: x F^+_K y \Rightarrow \neg (y F^+_K x)$ ($F^+_K$ denoting the transitive closure of $F_K$);
  \item $\forall p \in P_K \: |^\bullet p| \leq 1 \land |p^\bullet| \leq 1$.
  \end{enumerate}
\end{definition}
``Occurrence net'' is also defined in \cite{genrich:80,best:82}, but is used with a different meaning in other sources, see e.g. \cite{nielsen:81}.
\begin{definition}[Process \cite{goltz:83} (adapted)]
  Let $N=(P_N, T_N, W_N, M_0)$ be a Petri net and $M$ a reachable marking in
  $N$.  A \emph{process} is a pair $(K, q)$ of an occurrence net
  $K=(P_K,T_K,F_K)$ and a mapping $q\colon K\rightarrow N$ which starts in $M$ and satisfies the
  following properties
  \begin{enumerate}
  \item $q(P_K)  \subseteq P_N$ and $q(T_K) \subseteq T_N$;
  \item If $C \mathrel{:=} \{x \in P_K \mid {^\bullet} x = \emptyset\}$ then
    $M(p) = |q^{-1}(p) \cap C|$ for all $p \in P_N$;
  \item $W_N(p,q(t)) = |q^{-1}(p)\cap ^\bullet t|$ and $W_N(q(t),p) =
    |q^{-1}(p)\cap t^\bullet|$ for all $t\in T_K$ and $p \in P_N$.
  \end{enumerate}
\end{definition}
A process is thus an occurrence net that maps back to a Petri net, such that
it respects the transitions, places and weight function of the Petri net.
Furthermore, the process starts at the marking $M$ in the net.
\begin{definition}
  A \emph{realisability certificate} for $(\overline{\mathcal{H}},f)$ is a
  process for the Petri net $(N, M_0)$ constructed from
  $(\overline{\mathcal{H}},f)$ that leads from the initial marking $M_0$ to
  the target marking $M_T$.
\end{definition}

A realisability certificate exists if and only if the target marking
$M_T$ is reachable from the initial marking $M_0$ \cite[Theorem
3.6]{goltz:83}.

A realisability certificate can be constructed from the initial marking using an algorithm exemplified in \cite{goltz:83}. Furthermore, the Petri net tool \emph{A Low Level Analyzer} (LoLA)\cite{Schmidt:2000} is, given a Petri net with its initial marking and a target marking, able to compute a so-called witness path, which is an object isomorphic to a realisability certificate (or tell if the target marking is unreachable and no realisability certificate exists). The computational complexity of reachability in Petri nets is a complex question in the general case \cite{mayr:81, reutenauer:90}. However, in practical cases LoLA performs well---in particular, in our use cases it normally finishes in less than 10 minutes. In this paper, we used LoLA to produce the underlying certificate for our figures. 
For an example of a realisability certificate see Fig.~\ref{fig:cert_example_flow},
which is a certificate for the flow in Fig.~\ref{fig:hyperflow}. 
For a more chemical example of a 
realisabiltiy certificate see Fig.~\ref{fig:realisable_DAG_formose}, which is for the flow in
Fig.~\ref{fig:realisable_flow_formose}. 
To draw realisability certificates more
concisely we have omitted $q^{-1}(v)$ for all $v\in V_E\cup V_T$, i.e., the places in
the occurrence net that correspond to the external or target places in the Petri net,
as well as $q^{-1}(v_e, e)$ for all  $v_e \in V_E$ and $q^{-1}(e^+, v_{e^+})$ for all $v_{e^+} \in V_T$, 
i.e., the arcs leaving the external places or entering the target places in the Petri net. We have also omitted transitions
on which the corresponding edges have no flow and places corresponding to
vertices with no in- nor out-flow.

\begin{figure}[tbp]
  \centering
  \begin{tikzpicture}[abstractNetwork]
    \node[vertex] (B1)  at (0,0)   {$B$} ;
    \node[hyperEdge, label=left:${e_1}$, below = of B1]  (e11){};
    \node[vertex, below = of e11] (C1) {$C$};
    
    \draw[arrow] (B1) -- (e11);
    \draw[arrow] (e11) -- (C1);
    
    \node[vertex, right = of B1] (B2) {$B$} ;
    \node[hyperEdge, label=left:${e_1}$, below = of B2]  (e12){};
    \node[vertex, below = of e12] (C2) {$C$};
    \node[hyperEdge, label=left:${e_2}$, below = of C2]  (e2){};
    \node[vertex, below left = of e2] (A) {$A$};
    \node[vertex, below right = of e2] (B3) {$B$};
      
    \draw[arrow] (B2) -- (e12);
    \draw[arrow] (e12) -- (C2);
    \draw[arrow] (C2) -- (e2);
    \draw[arrow] (e2) -- (A);
    \draw[arrow] (e2) -- (B3);
  \end{tikzpicture}%
  \caption{A realisability certificate for the flow in Fig.~\ref{fig:hyperflow}.}
  \label{fig:cert_example_flow}
\end{figure}

\begin{figure}[tbp]
\centering
\resizebox{\textwidth}{!}{%
  \renewcommand\modInputPrefix{data/formose_realisable_cert}%
  \input{data/formose_realisable_cert/out/019_dg_0_01100.tex}%
  }%
\caption{A realisability certificate for the flow in Fig. \ref{fig:realisable_flow_formose}. The input compounds are marked with green and the output compounds are marked with blue.}
\label{fig:realisable_DAG_formose}
\end{figure}

A realisability certificate is a directed acyclic graph (DAG) by Def.~\ref{def:occ_net} (1), hence it has a topological sorting \cite{cormen:09}, i.e., a linear ordering of the vertices such that for every edge $(u,v)$, $u$ comes before $v$ in the ordering. Such a topological sorting of the realisability certificates produces one possible firing sequence of its transitions which realises the flow.

Finally, we note that a realisability certificate is formulated in the Petri net literature such that it gives an individual token
interpretation, where individual tokens are distinguishable
\cite{glabbeek:05}. Such a property is an advantage (actually, a necessity) if one is to do atom tracing~\cite{trace} of stable isotope atoms through the pathway.

\section{Extended Realisability}\label{sec:extended_real}

We have demonstrated above that flows may not be realisable. In this
section, we study various means by which non-realisable flows may be
made realisable.
\begin{definition}[Scaled-Realisable]
  A flow $f$ on an extended hypergraph
  $\overline{\mathcal{H}}=(V,\overline{E})$ is \emph{scaled-realisable}, if
  there exists an integer $k\ge 1$ such that the resulting flow $k\cdot f$ is realisable.
\end{definition}

Asking if a flow $f$ is scaled-realisable corresponds to
asking if $k$ copies of $f$ can be realised concurrently. This is of
interest as in the real world, a pathway is often not just happening once,
but multiple times. Therefore, even if the flow is not
realisable, it is meaningful to consider if the scaled flow is. 
Fig. \ref{fig:scaled_realisable_flow_formose_id} is an example of such a flow which is not realisable itself, but is scaled-realisable by a factor $2$. The flow represents an alternative formose reaction. In order to see that this flow is indeed scaled-realisable, see the realisability certificate of the flow in Fig. \ref{fig:scaled_realisable_DAG_formose_id}.%

\begin{figure}[tbp]
\centering
\resizebox{\textwidth}{!}{%
  \renewcommand\modInputPrefix{data/formose_scaled_flow_id}%
\begin{tikzpicture}[abstractNetwork, node distance=2em and 2em]
\node[modStyleDGHyperVertex, inoutMol, text depth=0] (v-0-0) {{$\mathrm{Glycoald}$}};
\node[modStyleDGHyperVertexHidden, above = of v-0-0] (above-v-0-0) {};
\node[modStyleDGHyperEdge, xshift=-2em, right = of above-v-0-0] (v-3-0) {$\mathrm{1}$};
\node[modStyleDGHyperEdge, xshift=2em, left = of above-v-0-0] (v-4-0) {$\mathrm{1}$};
\node[modStyleDGHyperVertex, above = of above-v-0-0] (v-2-0) {{$\mathrm{EtD}$}};
\node[modStyleDGHyperEdge, above = of v-2-0] (v-6-0) {$\mathrm{2}$};
\node[modStyleDGHyperVertex, above = of v-6-0, inputMol] (v-1-0) {{$\mathrm{Formald}$}};
\node[modStyleDGHyperVertex, text depth=0, right = of v-6-0] (v-5-0) {{$\mathrm{Glyald}$}};
\node[modStyleDGHyperEdge, above = of v-5-0] (v-10-0) {$\mathrm{1}$};
\node[modStyleDGHyperVertex, right = of v-10-0] (v-9-0) {{$\mathrm{Propenetriol}$}};
\node[modStyleDGHyperEdge, below = of v-9-0] (v-36-0) {$\mathrm{1}$};
\node[modStyleDGHyperVertex, right = of v-36-0] (v-35-0) {{$\mathrm{2Hex}$}};
\node[modStyleDGHyperEdge, right = of v-35-0] (v-53-0) {$\mathrm{1}$};
\node[modStyleDGHyperVertex, right = of v-53-0] (v-21-0) {{$\mathrm{EnolHex}$}};
\node[modStyleDGHyperEdge, below=of v-21-0] (v-65-0) {$\mathrm{1}$};
\node[modStyleDGHyperVertex, below = of v-65-0] (v-15-0) {{$\mathrm{Aldohex}$}};
\node[modStyleDGHyperEdge, yshift=-0.5mm, below = of v-15-0] (v-48-0) {$\mathrm{1}$};
\node[modStyleDGHyperVertex, text depth=0, left = of v-48-0] (v-7-0) {{$\mathrm{E4}$}};
\node[modStyleDGHyperEdge, left = of v-7-0] (v-18-0) {$\mathrm{1}$};
\path[modStyleDGHyperConnector] (v-0-0) to (v-3-0);
\path[modStyleDGHyperConnector] (v-3-0) to (v-2-0);
\path[modStyleDGHyperConnector] (v-2-0) to (v-4-0);
\path[modStyleDGHyperConnector] (v-4-0) to (v-0-0);
\path[modStyleDGHyperConnector] (v-1-0) to (v-6-0);
\path[modStyleDGHyperConnector] (v-2-0) to (v-6-0);
\path[modStyleDGHyperConnector] (v-6-0) to (v-5-0);
\path[modStyleDGHyperConnector] (v-5-0) to (v-10-0);
\path[modStyleDGHyperConnector] (v-10-0) to (v-9-0);
\path[modStyleDGHyperConnector] (v-7-0) to (v-18-0);
\path[modStyleDGHyperConnector] (v-18-0) to (v-0-0);
\path[modStyleDGHyperConnector] (v-18-0) to (v-2-0);
\path[modStyleDGHyperConnector] (v-5-0) to (v-36-0);
\path[modStyleDGHyperConnector] (v-9-0) to (v-36-0);
\path[modStyleDGHyperConnector] (v-36-0) to (v-35-0);
\path[modStyleDGHyperConnector] (v-15-0) to (v-48-0);
\path[modStyleDGHyperConnector] (v-48-0) to (v-2-0);
\path[modStyleDGHyperConnector] (v-48-0) to (v-7-0);
\path[modStyleDGHyperConnector] (v-35-0) to (v-53-0);
\path[modStyleDGHyperConnector] (v-53-0) to (v-21-0);
\path[modStyleDGHyperConnector] (v-21-0) to (v-65-0);
\path[modStyleDGHyperConnector] (v-65-0) to (v-15-0);

\node[modStyleDGHyperVertexHidden, left = of v-0-0] (v-0-0-IOFlow) {};
\path[modStyleDGHyperConnector] (v-0-0-IOFlow) to[modStyleDGHyperHasReverseShortcut] node[auto, swap] {$\mathrm{1}$} (v-0-0);
\path[modStyleDGHyperConnector] (v-0-0) to[modStyleDGHyperHasReverseShortcut] node[auto, swap] {$\mathrm{2}$} (v-0-0-IOFlow);
\node[modStyleDGHyperVertexHidden, left = of v-1-0] (v-1-0-IOFlow) {};
\path[modStyleDGHyperConnector] (v-1-0-IOFlow) to node[auto, swap] {$\mathrm{2}$} (v-1-0);
\end{tikzpicture}
  }%
\caption{An example of a flow for the formose reaction which is not realisable but is scaled-realisable by a factor $2$. The input compound Formald is marked with green and Glycoald which is both an input and output compound is marked with turquoise. The SMILES strings for all molecule identifiers are listed in Appendix, Table~\ref{tab:smiles}.}
\label{fig:scaled_realisable_flow_formose_id}
\end{figure}

\begin{figure}[tbp]
\centering
\resizebox{\textwidth}{!}{%
  \renewcommand\modInputPrefix{data/formose_scaled_cert_id}%
\begin{tikzpicture}[abstractNetwork, node distance=2em and 2.5em]
\node[modStyleDGHyperVertex, inputMol] (v-0-8) {{$\mathrm{Glycoald}$}};
\node[modStyleDGHyperEdge,right = of v-0-8] (v-3-13) {};
\node[modStyleDGHyperVertex, right = of v-3-13] (v-2-15) {{$\mathrm{EtD}$}};
\node[modStyleDGHyperEdge, right = of v-2-15] (v-6-17) {};
\node[modStyleDGHyperVertex, right = of v-6-17] (v-5-19) {{$\mathrm{Glyald}$}};
\node[modStyleDGHyperVertex, below = of v-6-17, inputMol] (v-1-9) {{$\mathrm{Formald}$}};
\node[modStyleDGHyperEdge, right = of v-5-19] (v-10-20) {};
\node[modStyleDGHyperVertex,right = of v-10-20] (v-9-21) {{$\mathrm{Propenetriol}$}};
\node[modStyleDGHyperEdge, right = of v-9-21] (v-36-22) {};
\node[modStyleDGHyperVertex, below = of v-36-22] (v-5-18) {{$\mathrm{Glyald}$}};
\node[modStyleDGHyperEdge, left = of v-5-18] (v-6-16) {};
\node[modStyleDGHyperVertex, left = of v-6-16, inputMol] (v-1-10) {{$\mathrm{Formald}$}};
\node[modStyleDGHyperVertex, below = of v-6-16] (v-2-14) {{$\mathrm{EtD}$}};
\node[modStyleDGHyperEdge,left = of v-2-14] (v-3-12) {};
\node[modStyleDGHyperVertex, left = of v-3-12, inputMol] (v-0-7) {{$\mathrm{Glycoald}$}};
\node[modStyleDGHyperVertex, right = of v-36-22] (v-35-23) {{$\mathrm{2Hex}$}};
\node[modStyleDGHyperEdge, right = of v-35-23] (v-53-24) {};
\node[modStyleDGHyperVertex, below = of v-53-24] (v-21-25) {{$\mathrm{EnolHex}$}};
\node[modStyleDGHyperEdge, below = of v-21-25] (v-65-26) {};
\node[modStyleDGHyperVertex, below = of v-65-26] (v-15-27) {{$\mathrm{Aldohex}$}};
\node[modStyleDGHyperEdge, left = of v-15-27] (v-48-28) {};
\node[modStyleDGHyperVertex, left = of v-48-28] (v-7-29) {{$\mathrm{E4}$}};
\node[modStyleDGHyperEdge, left = of v-7-29] (v-18-31) {};
\node[modStyleDGHyperVertex,below = of v-18-31, outputMol] (v-0-34) {{$\mathrm{Glycoald}$}};
\node[modStyleDGHyperVertex, left = of v-18-31] (v-2-33) {{$\mathrm{EtD}$}};
\node[modStyleDGHyperEdge, left = of v-2-33] (v-6-36) {};
\node[modStyleDGHyperVertex, below = of v-6-36, inputMol] (v-1-6) {{$\mathrm{Formald}$}};
\node[modStyleDGHyperVertex,left = of v-6-36] (v-5-39) {{$\mathrm{Glyald}$}};
\node[modStyleDGHyperVertex, below = of v-7-29] (v-2-30) {{$\mathrm{EtD}$}};
\node[modStyleDGHyperEdge, below = of v-2-30] (v-6-32) {};
\node[modStyleDGHyperVertex, right = of v-6-32, inputMol] (v-1-11) {{$\mathrm{Formald}$}};
\node[modStyleDGHyperVertex, left = of v-6-32] (v-5-35) {{$\mathrm{Glyald}$}};
\node[modStyleDGHyperEdge, left = of v-5-35] (v-10-38) {};
\node[modStyleDGHyperVertex, left = of v-10-38] (v-9-40) {{$\mathrm{Propenetriol}$}};
\node[modStyleDGHyperEdge, left = of v-9-40] (v-36-41) {};
\node[modStyleDGHyperVertex, left = of v-36-41] (v-35-42) {{$\mathrm{2Hex}$}};
\node[modStyleDGHyperEdge, left = of v-35-42] (v-53-43) {};
\node[modStyleDGHyperVertex, left=of v-53-43] (v-21-44) {{$\mathrm{EnolHex}$}};
\node[modStyleDGHyperEdge, below = of v-21-44] (v-65-45) {};
\node[modStyleDGHyperVertex, right = of v-65-45] (v-15-46) {{$\mathrm{Aldohex}$}};
\node[modStyleDGHyperEdge, right = of v-15-46] (v-48-47) {};
\node[modStyleDGHyperVertex,below = of v-48-47] (v-2-49) {{$\mathrm{EtD}$}};
\node[modStyleDGHyperEdge, left = of v-2-49] (v-4-51) {};
\node[modStyleDGHyperVertex, left = of v-4-51, outputMol] (v-0-54) {{$\mathrm{Glycoald}$}};
\node[modStyleDGHyperVertex, right= of v-48-47] (v-7-48) {{$\mathrm{E4}$}};
\node[modStyleDGHyperEdge, right = of v-7-48] (v-18-50) {};
\node[modStyleDGHyperVertex, below = of v-18-50, outputMol] (v-0-53) {{$\mathrm{Glycoald}$}};
\node[modStyleDGHyperVertex, right = of v-18-50] (v-2-52) {{$\mathrm{EtD}$}};
\node[modStyleDGHyperEdge, right = of v-2-52] (v-4-55) {};
\node[modStyleDGHyperVertex, right = of v-4-55, outputMol] (v-0-58) {{$\mathrm{Glycoald}$}};

\path[modStyleDGHyperConnector] (v-0-7) to (v-3-12);
\path[modStyleDGHyperConnector] (v-3-12) to (v-2-14);
\path[modStyleDGHyperConnector] (v-0-8) to (v-3-13);
\path[modStyleDGHyperConnector] (v-3-13) to (v-2-15);
\path[modStyleDGHyperConnector] (v-2-49) to (v-4-51);
\path[modStyleDGHyperConnector] (v-4-51) to (v-0-54);
\path[modStyleDGHyperConnector] (v-2-52) to (v-4-55);
\path[modStyleDGHyperConnector] (v-4-55) to (v-0-58);
\path[modStyleDGHyperConnector] (v-1-10) to (v-6-16);
\path[modStyleDGHyperConnector] (v-2-14) to (v-6-16);
\path[modStyleDGHyperConnector] (v-6-16) to (v-5-18);
\path[modStyleDGHyperConnector] (v-1-9) to (v-6-17);
\path[modStyleDGHyperConnector] (v-2-15) to (v-6-17);
\path[modStyleDGHyperConnector] (v-6-17) to (v-5-19);
\path[modStyleDGHyperConnector] (v-1-11) to (v-6-32);
\path[modStyleDGHyperConnector] (v-2-30) to (v-6-32);
\path[modStyleDGHyperConnector] (v-6-32) to (v-5-35);
\path[modStyleDGHyperConnector] (v-1-6) to (v-6-36);
\path[modStyleDGHyperConnector] (v-2-33) to (v-6-36);
\path[modStyleDGHyperConnector] (v-6-36) to (v-5-39);
\path[modStyleDGHyperConnector] (v-5-19) to (v-10-20);
\path[modStyleDGHyperConnector] (v-10-20) to (v-9-21);
\path[modStyleDGHyperConnector] (v-5-35) to (v-10-38);
\path[modStyleDGHyperConnector] (v-10-38) to (v-9-40);
\path[modStyleDGHyperConnector] (v-7-29) to (v-18-31);
\path[modStyleDGHyperConnector] (v-18-31) to (v-0-34);
\path[modStyleDGHyperConnector] (v-18-31) to (v-2-33);
\path[modStyleDGHyperConnector] (v-7-48) to (v-18-50);
\path[modStyleDGHyperConnector] (v-18-50) to (v-0-53);
\path[modStyleDGHyperConnector] (v-18-50) to (v-2-52);
\path[modStyleDGHyperConnector] (v-5-18) to (v-36-22);
\path[modStyleDGHyperConnector] (v-9-21) to (v-36-22);
\path[modStyleDGHyperConnector] (v-36-22) to (v-35-23);
\path[modStyleDGHyperConnector] (v-5-39) to (v-36-41);
\path[modStyleDGHyperConnector] (v-9-40) to (v-36-41);
\path[modStyleDGHyperConnector] (v-36-41) to (v-35-42);
\path[modStyleDGHyperConnector] (v-15-27) to (v-48-28);
\path[modStyleDGHyperConnector] (v-48-28) to (v-2-30);
\path[modStyleDGHyperConnector] (v-48-28) to (v-7-29);
\path[modStyleDGHyperConnector] (v-15-46) to (v-48-47);
\path[modStyleDGHyperConnector] (v-48-47) to (v-2-49);
\path[modStyleDGHyperConnector] (v-48-47) to (v-7-48);
\path[modStyleDGHyperConnector] (v-35-23) to (v-53-24);
\path[modStyleDGHyperConnector] (v-53-24) to (v-21-25);
\path[modStyleDGHyperConnector] (v-35-42) to (v-53-43);
\path[modStyleDGHyperConnector] (v-53-43) to (v-21-44);
\path[modStyleDGHyperConnector] (v-21-25) to (v-65-26);
\path[modStyleDGHyperConnector] (v-65-26) to (v-15-27);
\path[modStyleDGHyperConnector] (v-21-44) to (v-65-45);
\path[modStyleDGHyperConnector] (v-65-45) to (v-15-46);
\end{tikzpicture}
  }%
\caption{A realisability certificate for the flow in Fig. \ref{fig:scaled_realisable_flow_formose_id} when scaled by a factor $2$, making it scaled-realisable. The input compounds are marked with green and the output compounds are marked with blue. The SMILES strings for all molecule identifiers are listed in Appendix, Table~\ref{tab:smiles}.}
\label{fig:scaled_realisable_DAG_formose_id}
\end{figure}

However, not all flows are scaled-realisable. A counter-example is the
flow presented in Fig.~\ref{fig:missing_catalyst}: no integer scaling
can alleviate the fact that firing $e_1$ or $e_2$ requires $C$ or $D$
to be present at the outset. We note that Thm.~\ref{thm:non-scaled}
from Sec.~\ref{sec:math_realisability} provides an easily checkable
condition which if true implies that a flow is not scaled-realisable.

\begin{definition}[Borrow-Realisable] 
  Let $f$ be a flow on an extended hypergraph,
  $\overline{\mathcal{H}}=(V,\overline{E})$ and let $b$ be a function
  $b\colon V\rightarrow \mathbb{N}_0$.  Set
  $f'(e^-_v) = b(v) + f(e^-_v)$ and $f'(e^+_v) = b(v) + f(e^+_v)$ for
  all $v\in V$, and $f'(e) = f(e)$ for all $e\in E$. Then $f$ is
  \emph{borrow-realisable} if there exists a function $b$
  such that $f'$ is realisable.
\end{definition}
We denote $b$ as the \emph{borrowing function} and we say that $f'$ is
the flow $f$ where $v\in V$ has been borrowed $b(v)$ times. This
models that molecules required for reactions in the pathway can be
aquired from the environment (and returned afterwards). Formally, this
is specified by having an additional input and output flow $b(v)$ for
species $v$.  Furthermore, for a borrowing function $b$ we define
$|b| = \sum_{v\in V} b(v)$, i.e., the total count of molecules
borrowed. The idea of borrowing tokens in the corresponding Petri net
setting has been proposed in \cite[Proposition 10]{desel:1998}
together with a theorem which implies that $f'$ is realisable for some
$b$ with sufficiently large $|b|$. That is, every flow is in fact
borrow-realisable.

The combinatorics underlying the non-oxidative phase of the PPP has
been studied in a series of works focusing on simplifying principles
that explain the structure of metabolic networks, see e.g.\
\cite{Noor2010,Melendez-Hevia:1985}.  An example of a simple flow from
the PPP that is not scaled-realisable is shown in Fig.\
\ref{fig:borrow_realisable_flow_PPP}.  Here, the production of
glyceraldehyde (Glyald) is dependent of the presence of Hex-2-ulose
(2Hex), which depends on fructose 1-phosphate (F1P), which in turn
depends on Glyald.  This cycle of dependencies by
Thm.~\ref{thm:non-scaled} implies that firing is impossible unless one
of the molecules in this cycle is present at the outset, which cannot
be achieved by scaling.  As illustrated in
Fig.~\ref{fig:borrow_realisable_DAG_PPP} and proven by the existence
of the realisability certificate, the flow is borrow-realisable with
just one borrowing, namely of the compound Glyald.  Thus Glyald can be
seen as a network catalyst for this pathway.%

\begin{figure}[tbp]
  \centering
      \resizebox{\textwidth}{!}{%
	  \renewcommand\modInputPrefix{data/ppp_borrow_flow}%
\begin{tikzpicture}[abstractNetwork, node distance=2em and 2em]
\node[modStyleDGHyperVertex, text depth=0, inputMol] (v-0-0) at (0, 0) {{$\mathrm{Ru5P}$}};
\node[modStyleDGHyperEdge, right=of v-0-0] (v-3-0) {$\mathrm{2}$};
\node[modStyleDGHyperVertex, right=of v-3-0] (v-2-0) {{$\mathrm{R5P}$}};
\node[modStyleDGHyperEdge, below=of v-2-0] (v-14-0) {$\mathrm{2}$};
\node[modStyleDGHyperVertexHidden, yshift = 1.8em, below= of v-14-0] (v-14-0-below) {};
\node[modStyleDGHyperVertex, left=of v-14-0] (v-11-0) {{$\mathrm{S7P}$}};
\node[modStyleDGHyperEdge, left=of v-11-0)] (v-42-0) {$\mathrm{2}$};
\node[modStyleDGHyperVertex, left=of v-42-0] (v-35-0) {{$\mathrm{E4P}$}};
\node[modStyleDGHyperEdge, left=of v-35-0] (v-86-0) {$\mathrm{2}$};
\node[modStyleDGHyperVertex, outputMol, above=of v-86-0] (v-41-0) {{$\mathrm{F6P}$}};
\node[modStyleDGHyperVertex, left=of v-86-0] (v-13-0) {{$\mathrm{G3P}$}};
\node[modStyleDGHyperEdge, above=of v-13-0)] (v-108-0) {$\mathrm{1}$};
\node[modStyleDGHyperEdge, left=of v-13-0] (v-22-0) {$\mathrm{1}$};
\node[modStyleDGHyperVertex, left=of v-22-0] (v-21-0) {{$\mathrm{DHAP}$}};
\node[modStyleDGHyperEdge, above=of v-21-0] (v-135-0) {$\mathrm{1}$};
\node[modStyleDGHyperVertex, left=of v-108-0, text depth=0] (v-10-0) {{$\mathrm{Glyald}$}};

\node[modStyleDGHyperEdge, left=of v-21-0, yshift=1.90em] (v-253-0) {$\mathrm{1}$};
\node[modStyleDGHyperVertex, below=of v-253-0] (v-134-0) {{$\mathrm{F1P}$}};
\node[modStyleDGHyperVertex, above= of v-253-0] (v-34-0) {{$\mathrm{2Hex}$}};
\node[modStyleDGHyperVertexHidden, overlay, above= of v-0-0] (v-0-0-IOFlow) {};
\node[modStyleDGHyperVertexHidden, overlay, above=of v-41-0] (v-41-0-IOFlow) {};
\path[modStyleDGHyperConnector] (v-0-0) to (v-3-0);
\path[modStyleDGHyperConnector] (v-3-0) to (v-2-0);
\path[modStyleDGHyperConnector] (v-0-0) to (v-14-0);
\path[modStyleDGHyperConnector] (v-2-0) to (v-14-0);
\path[modStyleDGHyperConnector] (v-14-0) to (v-11-0);
\path[modStyleDGHyperConnector, overlay] (v-14-0) to[bend left=30, looseness=0.6] (v-13-0);
\path[modStyleDGHyperConnector] (v-13-0) to (v-22-0);
\path[modStyleDGHyperConnector] (v-22-0) to (v-21-0);
\path[modStyleDGHyperConnector] (v-11-0) to (v-42-0);
\path[modStyleDGHyperConnector] (v-13-0) [bend right=20] to (v-42-0);
\path[modStyleDGHyperConnector] (v-42-0) to (v-35-0);
\path[modStyleDGHyperConnector] (v-42-0) to (v-41-0);
\path[modStyleDGHyperConnector] (v-0-0) to (v-86-0);
\path[modStyleDGHyperConnector] (v-35-0) to (v-86-0);
\path[modStyleDGHyperConnector] (v-86-0) to (v-13-0);
\path[modStyleDGHyperConnector] (v-86-0) to (v-41-0);
\path[modStyleDGHyperConnector] (v-13-0) to (v-108-0);
\path[modStyleDGHyperConnector] (v-34-0) to[out=0, in=150, in looseness=0.5] (v-108-0);
\path[modStyleDGHyperConnector] (v-108-0) to (v-10-0);
\path[modStyleDGHyperConnector] (v-108-0) to (v-41-0);
\path[modStyleDGHyperConnector] (v-10-0) to (v-135-0);
\path[modStyleDGHyperConnector] (v-21-0) to (v-135-0);
\path[modStyleDGHyperConnector] (v-135-0) to[out=-160, in=30] (v-134-0);
\path[modStyleDGHyperConnector] (v-134-0) to (v-253-0);
\path[modStyleDGHyperConnector] (v-253-0) to (v-34-0);
\path[modStyleDGHyperConnector] (v-0-0-IOFlow) to node[auto, swap] {$\mathrm{6}$} (v-0-0);
\path[modStyleDGHyperConnector] (v-41-0) to node[auto, swap] {$\mathrm{5}$} (v-41-0-IOFlow);
\end{tikzpicture}
  	}%
      \caption{Example of a flow for the pentose phosphate pathway
        that is not scaled-realisable. The flow is
        borrow-realisable. The input compound is marked with green and the
        output compound is marked with blue. The SMILES strings for all molecule identifiers are listed in Appendix, Table~\ref{tab:smiles}.}
      \label{fig:borrow_realisable_flow_PPP}
\end{figure}

\begin{figure}[tbp]
  \centering
  \resizebox{\textwidth}{!}{%
	  \renewcommand\modInputPrefix{data/ppp_borrow_cert}%
{%
\newcommand\makeMid[3]{%
	\node[modStyleDGHyperVertexHidden, #3=1.5em of #2] (#1-top) {};
	\node[modStyleDGHyperVertexHidden, below=of #1-top] (#1-bottom) {};
	\path (#1-top)--(#1-bottom) node[modStyleDGHyperEdge, midway] (#1) {};
}
\newcommand\makeMidAbove[3]{%
	\node[modStyleDGHyperVertexHidden, #3=1.5em of #2] (#1-bottom) {};
	\node[modStyleDGHyperVertexHidden, above=of #1-bottom] (#1-top) {};
	\path (#1-top)--(#1-bottom) node[modStyleDGHyperEdge, midway] (#1) {};
}
\begin{tikzpicture}[abstractNetwork, node distance=1em and 2em]
\node[modStyleDGHyperVertex, text depth=0, inputMol] (v-0-10) {{$\mathrm{Ru5P}$}};
\node[modStyleDGHyperEdge, right= of v-0-10] (v-3-15) {};
\node[modStyleDGHyperVertex, right= of v-3-15] (v-2-17) {{$\mathrm{R5P}$}};
\node[modStyleDGHyperVertex, text depth=0, inputMol, below=of v-2-17] (v-0-12) {{$\mathrm{Ru5P}$}};
\makeMid{v-14-19}{v-2-17}{right}
\node[modStyleDGHyperVertex, right=1.5em of v-14-19-top] (v-11-21) {{$\mathrm{S7P}$}};
\node[modStyleDGHyperVertex, below=of v-11-21] (v-13-22) {{$\mathrm{G3P}$}};
\makeMid{v-42-25}{v-11-21}{right}
\node[modStyleDGHyperVertex, outputMol, right=1.5em of v-42-25-top] (v-41-28) {{$\mathrm{F6P}$}};
\node[modStyleDGHyperVertex, below= of v-41-28] (v-35-27) {{$\mathrm{E4P}$}};
\node[modStyleDGHyperVertex, text depth=0, inputMol, below= of v-35-27] (v-0-14) {{$\mathrm{Ru5P}$}};
\makeMid{v-86-31}{v-35-27}{right}
\node[modStyleDGHyperVertex, outputMol, right=1.5em of v-86-31-top] (v-41-35) {{$\mathrm{F6P}$}};
\node[modStyleDGHyperVertex, below=of v-41-35] (v-13-36) {{$\mathrm{G3P}$}};

\node[modStyleDGHyperVertex, text depth=0, inputMol, below=14em of v-0-10] (v-0-11) {{$\mathrm{Ru5P}$}};
\node[modStyleDGHyperEdge, right=of v-0-11] (v-3-16) {};
\node[modStyleDGHyperVertex, right=of v-3-16] (v-2-18) {{$\mathrm{R5P}$}};
\node[modStyleDGHyperVertex, text depth=0, inputMol, above=of v-2-18] (v-0-13) {{$\mathrm{Ru5P}$}};
\makeMidAbove{v-14-20}{v-2-18}{right}
\node[modStyleDGHyperVertex, right=1.5em of v-14-20-bottom] (v-11-23) {{$\mathrm{S7P}$}};
\node[modStyleDGHyperVertex, above=of v-11-23] (v-13-24) {{$\mathrm{G3P}$}};
\makeMidAbove{v-42-26}{v-11-23}{right}
\node[modStyleDGHyperVertex, outputMol, right=1.5em of v-42-26-bottom] (v-41-30) {{$\mathrm{F6P}$}};
\node[modStyleDGHyperVertex, above=of v-41-30] (v-35-29) {{$\mathrm{E4P}$}};
\node[modStyleDGHyperVertex, text depth=0, inputMol, above=of v-35-29] (v-0-8) {{$\mathrm{Ru5P}$}};
\makeMidAbove{v-86-33}{v-35-29}{right}
\node[modStyleDGHyperVertex, outputMol, right=1.5em of v-86-33-bottom] (v-41-37) {{$\mathrm{F6P}$}};
\node[modStyleDGHyperVertex,  above=of v-41-37] (v-13-38) {{$\mathrm{G3P}$}};

\node[modStyleDGHyperEdge, at=($(v-13-38)!0.5!(v-13-36)$)] (v-108-49) {};
\node[modStyleDGHyperVertex, borrowMol, at=(v-108-49), shift={(3.5em, -1.2em)}, text depth=0] (v-10-50) {{$\mathrm{Glyald}$}};
\node[modStyleDGHyperVertex, outputMol, at=(v-108-49), shift={(3.5em, +1.2em)},] (v-41-51) {{$\mathrm{F6P}$}};

\node[modStyleDGHyperVertex, left=of v-108-49] (v-34-47) {{$\mathrm{2Hex}$}};
\node[modStyleDGHyperEdge, left=of v-34-47] (v-253-45) {};
\node[modStyleDGHyperVertex, left=of v-253-45] (v-134-44) {{$\mathrm{F1P}$}};
\node[modStyleDGHyperEdge, left=of v-134-44] (v-135-43) {};
\node[modStyleDGHyperVertex, above=of v-135-43] (v-21-42) {{$\mathrm{DHAP}$}};
\node[modStyleDGHyperEdge, right=of v-21-42] (v-22-40) {};

\node[modStyleDGHyperVertex, text depth=0, borrowMol, left=of v-135-43] (v-10-0) {{$\mathrm{Glyald}$}};

\path[modStyleDGHyperConnector] (v-0-10) to (v-3-15);
\path[modStyleDGHyperConnector] (v-3-15) to (v-2-17);
\path[modStyleDGHyperConnector] (v-0-11) to (v-3-16);
\path[modStyleDGHyperConnector] (v-3-16) to (v-2-18);
\path[modStyleDGHyperConnector] (v-0-12) to (v-14-19);
\path[modStyleDGHyperConnector] (v-2-17) to (v-14-19);
\path[modStyleDGHyperConnector] (v-14-19) to (v-11-21);
\path[modStyleDGHyperConnector] (v-14-19) to (v-13-22);
\path[modStyleDGHyperConnector] (v-0-13) to (v-14-20);
\path[modStyleDGHyperConnector] (v-2-18) to (v-14-20);
\path[modStyleDGHyperConnector] (v-14-20) to (v-11-23);
\path[modStyleDGHyperConnector] (v-14-20) to (v-13-24);
\path[modStyleDGHyperConnector] (v-13-36) to[in=0, out=-170] (v-22-40);
\path[modStyleDGHyperConnector] (v-22-40) to (v-21-42);
\path[modStyleDGHyperConnector] (v-11-21) to (v-42-25);
\path[modStyleDGHyperConnector] (v-13-22) to (v-42-25);
\path[modStyleDGHyperConnector] (v-42-25) to (v-35-27);
\path[modStyleDGHyperConnector] (v-42-25) to (v-41-28);
\path[modStyleDGHyperConnector] (v-11-23) to (v-42-26);
\path[modStyleDGHyperConnector] (v-13-24) to (v-42-26);
\path[modStyleDGHyperConnector] (v-42-26) to (v-35-29);
\path[modStyleDGHyperConnector] (v-42-26) to (v-41-30);
\path[modStyleDGHyperConnector] (v-0-14) to (v-86-31);
\path[modStyleDGHyperConnector] (v-35-27) to (v-86-31);
\path[modStyleDGHyperConnector] (v-86-31) to (v-13-36);
\path[modStyleDGHyperConnector] (v-86-31) to (v-41-35);
\path[modStyleDGHyperConnector] (v-0-8) to (v-86-33);
\path[modStyleDGHyperConnector] (v-35-29) to (v-86-33);
\path[modStyleDGHyperConnector] (v-86-33) to (v-13-38);
\path[modStyleDGHyperConnector] (v-86-33) to (v-41-37);
\path[modStyleDGHyperConnector] (v-13-38) to (v-108-49);
\path[modStyleDGHyperConnector] (v-34-47) to (v-108-49);
\path[modStyleDGHyperConnector] (v-108-49) to (v-10-50);
\path[modStyleDGHyperConnector] (v-108-49) to (v-41-51);
\path[modStyleDGHyperConnector] (v-10-0) to (v-135-43);
\path[modStyleDGHyperConnector] (v-21-42) to (v-135-43);
\path[modStyleDGHyperConnector] (v-135-43) to (v-134-44);
\path[modStyleDGHyperConnector] (v-134-44) to (v-253-45);
\path[modStyleDGHyperConnector] (v-253-45) to (v-34-47);
\end{tikzpicture}%
}
  }%
  \caption{A realisability certificate for the flow in
    Fig.~\ref{fig:borrow_realisable_flow_PPP} where the molecule Glyald is
    borrowed in order to make it borrow-realisable. The input compounds are
    marked with green, the output compounds are marked with blue and the
    borrowed compound is marked with purple. The SMILES strings for all molecule identifiers are listed in Appendix, Table~\ref{tab:smiles}.}
  \label{fig:borrow_realisable_DAG_PPP}
\end{figure}

\section{Representations of Pathways}
\label{sec:pathwayrep}
We have described two ways of modelling pathways:
flows and realisability certificates. The realisability
certificate defines a causal order in the pathway and explicitly
expresses which individual molecule is used when and for which
reaction. A realisability certificate uniquely determines a
corresponding flow.
Flows, on the other
hand, do not specify the order of the reactions or which one of
multiple copies of a molecules is used in which reaction.  A flow therefore may correspond to multiple different realisability
certificates, each representing a different mechanism.

\begin{figure}[tbp]
  \centering
  \resizebox{0.4\textwidth}{!}{%
  \begin{tikzpicture}[abstractNetwork, node distance = 2em and 2em]
  \tikzstyle{every node}=[font=\small]
  \node[circle, minimum size = 7.7cm] (c) at (0,0) {};
  \node at (c.90) (E4P){E4P};
  \node at (c.50) (S7P) {S7P};
  \node at (c.10) (R5P) {R5P};
  \node at (c.330) (Ru5P) {Ru5P};
  \node at (c.290) (X5P) {X5P};
  \node at (c.250) (G3P) {G3P};
  \node at (c.210) (DHAP) {DHAP};
  \node at (c.170) (FBP) {FBP};
  \node at (c.130) (F6P) {F6P};
  \node at (c.70) (mid_1) {};
  \node[xshift=-0.05em] at (c.30) (mid_2) {};
  \node at (c.270) (mid_3) {};
  \node[xshift=0.05em] at (c.190) (mid_5) {};
  \node at (c.110) (mid_6) {};
  \draw[arrow] (E4P) [bend left=14]  to (S7P);
    \draw[arrow] (S7P) [bend left = 14] to (R5P);
    \draw[arrow] (R5P) [bend left = 14] to (Ru5P);
    \draw[arrow] (Ru5P) [bend left = 14] to (X5P);
    \draw[arrow] (X5P) [bend left = 14] to (G3P);
    \draw[arrow] (G3P) [bend left = 14] to (DHAP);
    \draw[arrow] (DHAP) [bend left = 14] to (FBP);
    \draw[arrow] (FBP) [bend left = 14] to (F6P);
    \draw[arrow] (F6P) [bend left = 14] to (E4P);
    \node[above right = of E4P, xshift = -1em] (F6P1) {F6P};
    \node[yshift=2.5em] at(mid_3)(mid_small_circ) {};
    \node at (mid_small_circ) [circle through ={(mid_3)}] (small_circ) {};
    \node at (small_circ.0) (E4P1) {E4P};
    \node at (small_circ.180) (F6P2) {F6P};
    \node[above left= of E4P, xshift = 1em] (AcP1) {AcP};
    \node at (small_circ.90) (mid_4) {};
    \node[above =of mid_4] (AcP2) {AcP};
    \draw[semithick] (E4P1) [bend left=35] to (mid_3.center);
    \draw[arrow] (mid_3.center) [bend left=35] to (F6P2);
    \draw[semithick] (F6P2) [bend left = 35] to (mid_4.center);
    \draw[arrow] (mid_4.center) [bend left = 35] to (E4P1);
    \draw[arrow] (mid_4.center) [bend right] to (AcP2);
    \draw[arrow]([yshift=0.02em]mid_6.center) [bend right=35] to (AcP1);
    \draw ($(mid_2)!3.5em!280:(R5P)$) node (mid_small_circ_3) {};
    \node at (mid_small_circ_3) [circle through ={(mid_2.center)}] (small_circ_3) {};
    \node at (small_circ_3.120) (G3P1) {G3P};
    \node at (small_circ_3.300) (X5P1) {X5P};
    \draw[semithick] (F6P1)  [bend right=35] to ([yshift=0.02em]mid_1.center);
    \draw[arrow] ([yshift=0.025em]mid_1.center) [bend left=35] to (G3P1);
    \draw[semithick] (G3P1) [bend left=30] to (mid_2.center);
    \draw[arrow] (mid_2.center) [bend left=30] to (X5P1);
    \draw ($(mid_5)!3.5em!280:(FBP)$) node (mid_small_circ_4) {};
    \node at (mid_small_circ_4) [circle through ={(mid_5)}] (small_circ_4) {};
    \node at (small_circ_4.300) (G3P2) {G3P};
    \draw[arrow] (X5P1) -- (G3P2) node[midway] (mid_7){};
    \draw[semithick] (G3P2) [bend left=37] to ([xshift=-0.017em]mid_5.center);
    \draw ($(mid_7)!3.5em!90:(X5P1)$) node (mid_small_circ_2) {};
    \node at (mid_small_circ_2) [circle through ={(mid_7)}] (small_circ_2) {};
    \node at (small_circ_2.200) (F6P3) {F6P};
    \node at (small_circ_2.20) (E4P2) {E4P};
    \node at (small_circ_2.110) (mid_8) {};
    \node[above = of mid_8] (AcP3){AcP};
    \draw[arrow](mid_7.center) [bend left=35] to (F6P3);
    \draw[semithick] (E4P2) [bend left=35] to (mid_7.center);
    \draw[semithick] (F6P3) [bend left=35] to (mid_8.center);
    \draw[arrow] (mid_8.center) [bend left=35] to (E4P2);
    \draw[arrow] (mid_8.center) [bend right] to (AcP3);
     \end{tikzpicture}%
     }%
  \caption{Example of a pathway drawing for the cyclic non-oxidative
  glycolysis (NOG) pathway. Recreated from \cite[Fig. 2a]{bogorad:13}.}
\label{fig:pathway_drawing}
\end{figure}

We want to point out that commonly used representations of pathways
in the life science literature fall in between these two
extremes, see Fig.~\ref{fig:pathway_drawing} for an example. In this example, the order of reactions is not fully
resolved---for instance, is F6P produced before E4P or after? Indeed, some unspecified choice of borrowing is needed to set
the pathway in motion. Additionally, the semantics of a molecule
identifier appearing in several places is unclear---for instance,
are the three appearances of G3P interchangeable in the associated
reactions or do they signify different individual instances of the
same type of molecule? In the former case, the figure corresponds to
a much larger number of different realisability certificates than in
the latter case. The answers to these questions have important
consequences for investigations where the identity of individual
atoms matter, such as atom tracing.

Furthermore, when there is a choice between different pathway
suggestions, avoiding borrow-realisable pathways often gives simpler
depictions. However, this introduces a bias among the possible
pathways, which may be unwanted, as borrow-realisable solutions are
usually equally simple in chemical terms.
We note that the need for borrowing in pathways is usually not
discussed in the literature. Additionally, there has been a lack of
computational methods to systematically look for borrow-realisable
pathways, even if they could equally likely form part of what happens
in nature. For instance, the PPP is usually depicted in a form that
give rise to a realisable flow depicted in Fig.~\ref{fig:realisable_flow_PPP_id}, 
with a realisability certificate
shown in Fig.~\ref{fig:realisable_DAG_PPP_id}. It could just as well be
described by the equally simple and chemically realistic borrow-realisable pathway depicted in
Fig.~\ref{fig:borrow_realisable_flow_PPP}.%

\begin{figure}[tbp]
\centering
\resizebox{\textwidth}{!}{%
   \renewcommand\modInputPrefix{data/ppp_realisable_flow_id}%
\begin{tikzpicture}[remember picture, scale=\modDGHyperScale]
\node[modStyleDGHyperVertex,  text depth=0, inputMol] (v-1-0) {{$\mathrm{Ru5P}$}};
\node[modStyleDGHyperEdge, right = of v-1-0] (v-3-0) {$\mathrm{2}$};
\node[modStyleDGHyperVertex, right = of v-3-0] (v-2-0) {{$\mathrm{R5P}$}};
\node[modStyleDGHyperEdge, below = of v-2-0] (v-6-0) {$\mathrm{2}$};
\node[modStyleDGHyperVertexHidden, yshift=2.1em, below= of v-6-0] (v-6-0-below) {};
\node[modStyleDGHyperVertex, below = of v-3-0] (v-4-0) {{$\mathrm{S7P}$}};
\node[modStyleDGHyperEdge, below = of v-1-0] (v-23-0) {$\mathrm{2}$};
\node[modStyleDGHyperVertex, left = of v-23-0] (v-22-0) {{$\mathrm{E4P}$}};
\node[modStyleDGHyperEdge, left = of v-22-0] (v-34-0) {$\mathrm{2}$};
\node[modStyleDGHyperVertex, left = of v-34-0] (v-5-0) {{$\mathrm{G3P}$}};
\node[modStyleDGHyperEdge, left = of v-5-0] (v-8-0) {$\mathrm{1}$};
\node[modStyleDGHyperVertex, left = of v-8-0] (v-7-0) {{$\mathrm{DHAP}$}};
\node[modStyleDGHyperEdge, above = of v-7-0] (v-12-0) {$\mathrm{1}$};
\node[modStyleDGHyperVertex, above = of v-8-0] (v-11-0) {{$\mathrm{FBP}$}};
\node[modStyleDGHyperEdge, above = of v-5-0] (v-21-0) {$\mathrm{1}$};
\node[modStyleDGHyperVertex, above = of v-34-0, outputMol] (v-20-0) {{$\mathrm{F6P}$}};
\path[modStyleDGHyperConnector] (v-1-0) to (v-3-0);
\path[modStyleDGHyperConnector] (v-3-0) to (v-2-0);
\path[modStyleDGHyperConnector] (v-1-0) to (v-6-0);
\path[modStyleDGHyperConnector] (v-2-0) to (v-6-0);
\path[modStyleDGHyperConnector] (v-6-0) to (v-4-0);
\path[modStyleDGHyperConnector, overlay] (v-6-0) to[bend left=30, looseness=0.6] (v-5-0);
\path[modStyleDGHyperConnector] (v-5-0) to (v-8-0);
\path[modStyleDGHyperConnector] (v-8-0) to (v-7-0);
\path[modStyleDGHyperConnector] (v-5-0) to (v-12-0);
\path[modStyleDGHyperConnector] (v-7-0) to (v-12-0);
\path[modStyleDGHyperConnector] (v-12-0) to (v-11-0);
\path[modStyleDGHyperConnector] (v-11-0) to (v-21-0);
\path[modStyleDGHyperConnector] (v-21-0) to (v-20-0);
\path[modStyleDGHyperConnector] (v-4-0) to (v-23-0);
\path[modStyleDGHyperConnector] (v-5-0) to [bend right=20] (v-23-0);
\path[modStyleDGHyperConnector] (v-23-0) to (v-20-0);
\path[modStyleDGHyperConnector] (v-23-0) to (v-22-0);
\path[modStyleDGHyperConnector] (v-1-0) to (v-34-0);
\path[modStyleDGHyperConnector] (v-22-0) to (v-34-0);
\path[modStyleDGHyperConnector] (v-34-0) to (v-5-0);
\path[modStyleDGHyperConnector] (v-34-0) to (v-20-0);
\node[modStyleDGHyperVertexHidden, overlay, above = of v-1-0] (v-1-0-IOFlow) {};
\path[modStyleDGHyperConnector] (v-1-0-IOFlow) to node[auto, swap] {$\mathrm{6}$} (v-1-0);
\node[modStyleDGHyperVertexHidden, overlay, above = of v-20-0] (v-20-0-IOFlow) {};
\path[modStyleDGHyperConnector] (v-20-0) to node[auto, swap] {$\mathrm{5}$} (v-20-0-IOFlow);
\end{tikzpicture}
  	}%
\caption{A flow for the pentose phosphate pathway which is realisable. The input compound is marked with green and the output compound is marked with blue. The SMILES strings for all molecule identifiers are listed in Appendix, Table~\ref{tab:smiles}.}
\label{fig:realisable_flow_PPP_id}
\end{figure}

\begin{figure}[tbp]
\centering
\resizebox{\textwidth}{!}{%
  \renewcommand\modInputPrefix{data/ppp_realisable_cert_id}%
{%
\newcommand\makeMid[3]{%
	\node[modStyleDGHyperVertexHidden, #3=1.5em of #2] (#1-top) {};
	\node[modStyleDGHyperVertexHidden, below=of #1-top] (#1-bottom) {};
	\path (#1-top)--(#1-bottom) node[modStyleDGHyperEdge, midway] (#1) {};
}
\newcommand\makeMidAbove[3]{%
	\node[modStyleDGHyperVertexHidden, #3=1.5em of #2] (#1-bottom) {};
	\node[modStyleDGHyperVertexHidden, above=of #1-bottom] (#1-top) {};
	\path (#1-top)--(#1-bottom) node[modStyleDGHyperEdge, midway] (#1) {};
}
\begin{tikzpicture}[abstractNetwork, node distance=1em and 2em]
\node[modStyleDGHyperVertex,inputMol, text depth=0] (v-1-10) {{$\mathrm{Ru5P}$}};
\node[modStyleDGHyperEdge, right = of v-1-10] (v-3-15) {};
\node[modStyleDGHyperVertex, right = of v-3-15] (v-2-17) {{$\mathrm{R5P}$}};
\makeMid{v-6-19}{v-2-17}{right}
\node[modStyleDGHyperVertex, text depth=0, inputMol, below= of v-2-17] (v-1-12) {{$\mathrm{Ru5P}$}};
\node[modStyleDGHyperVertex, right = 1.5em of v-6-19-top] (v-4-22) {{$\mathrm{S7P}$}};
\node[modStyleDGHyperVertex, below = of v-4-22)] (v-5-23) {{$\mathrm{G3P}$}};
\makeMid{v-23-24}{v-5-23}{right}
\node[modStyleDGHyperVertex, outputMol, above = of v-23-24] (v-20-26) {{$\mathrm{F6P}$}};
\node[modStyleDGHyperVertex,right = 1.5em of v-23-24-bottom] (v-22-27) {{$\mathrm{E4P}$}};
\makeMidAbove{v-34-30}{v-22-27}{right}
\node[modStyleDGHyperVertex, above = of v-22-27, inputMol, text depth=0] (v-1-13) {{$\mathrm{Ru5P}$}};
\node[modStyleDGHyperVertex,  right = 1.5em of v-34-30-top] (v-5-32) {{$\mathrm{G3P}$}};
\node[modStyleDGHyperVertex, below = of v-5-32, outputMol] (v-20-31) {{$\mathrm{F6P}$}};
\node[modStyleDGHyperEdge, above = 1.3em of v-5-32] (v-23-34) {};
\node[modStyleDGHyperVertex, above = of v-23-34, outputMol] (v-20-35) {{$\mathrm{F6P}$}};
\node[modStyleDGHyperVertex, right = of v-23-34] (v-22-36) {{$\mathrm{E4P}$}};
\makeMid{v-34-38}{v-22-36}{right}
\node[modStyleDGHyperVertex,  below = of v-22-36, inputMol, text depth = 0] (v-1-7) {{$\mathrm{Ru5P}$}};
\node[modStyleDGHyperVertex, right = 1.5em of v-34-38-top, outputMol] (v-20-39) {{$\mathrm{F6P}$}};
\node[modStyleDGHyperVertex, below = of v-20-39] (v-5-40) {{$\mathrm{G3P}$}};

\node[modStyleDGHyperVertex, below = of v-1-12, inputMol, text depth = 0] (v-1-11) {{$\mathrm{Ru5P}$}};
\node[modStyleDGHyperVertex, below = of v-1-11] (v-2-16) {{$\mathrm{R5P}$}};
\node[modStyleDGHyperEdge, left = of v-2-16] (v-3-14) {};
\node[modStyleDGHyperVertex, text depth=0,inputMol, left = of v-3-14] (v-1-9) {{$\mathrm{Ru5P}$}};

\makeMid{v-6-18}{v-1-11}{right}
\node[modStyleDGHyperVertex, below = of v-5-23] (v-4-20) {{$\mathrm{S7P}$}};
\node[modStyleDGHyperVertex, below = of v-4-20] (v-5-21) {{$\mathrm{G3P}$}};
\node[modStyleDGHyperEdge, below = 1.4em of v-23-24-bottom] (v-8-25) {};
\node[modStyleDGHyperVertex, below = 1.05em of v-22-27] (v-7-28) {{$\mathrm{DHAP}$}};
\node[modStyleDGHyperEdge, below = 1.4em of v-34-30-bottom] (v-12-42) {};
\node[modStyleDGHyperVertex, right = of v-12-42] (v-11-43) {{$\mathrm{FBP}$}};
\node[modStyleDGHyperEdge, right = of v-11-43] (v-21-44) {};
\node[modStyleDGHyperVertex, right = of v-21-44, outputMol] (v-20-46) {{$\mathrm{F6P}$}};

\path[modStyleDGHyperConnector] (v-1-9) to (v-3-14);
\path[modStyleDGHyperConnector] (v-3-14) to (v-2-16);
\path[modStyleDGHyperConnector] (v-1-10) to (v-3-15);
\path[modStyleDGHyperConnector] (v-3-15) to (v-2-17);
\path[modStyleDGHyperConnector] (v-1-11) to (v-6-18);
\path[modStyleDGHyperConnector] (v-2-16) to (v-6-18);
\path[modStyleDGHyperConnector] (v-6-18) to (v-4-20);
\path[modStyleDGHyperConnector] (v-6-18) to (v-5-21);
\path[modStyleDGHyperConnector] (v-1-12) to (v-6-19);
\path[modStyleDGHyperConnector] (v-2-17) to (v-6-19);
\path[modStyleDGHyperConnector] (v-6-19) to (v-4-22);
\path[modStyleDGHyperConnector] (v-6-19) to (v-5-23);
\path[modStyleDGHyperConnector] (v-5-21) to (v-8-25);
\path[modStyleDGHyperConnector] (v-8-25) to (v-7-28);
\path[modStyleDGHyperConnector] (v-5-40) to (v-12-42);
\path[modStyleDGHyperConnector] (v-7-28) to (v-12-42);
\path[modStyleDGHyperConnector] (v-12-42) to (v-11-43);
\path[modStyleDGHyperConnector] (v-11-43) to (v-21-44);
\path[modStyleDGHyperConnector] (v-21-44) to (v-20-46);
\path[modStyleDGHyperConnector] (v-4-20) to (v-23-24);
\path[modStyleDGHyperConnector] (v-5-23) to (v-23-24);
\path[modStyleDGHyperConnector] (v-23-24) to (v-20-26);
\path[modStyleDGHyperConnector] (v-23-24) to (v-22-27);
\path[modStyleDGHyperConnector] (v-4-22) to (v-23-34);
\path[modStyleDGHyperConnector] (v-5-32) to (v-23-34);
\path[modStyleDGHyperConnector] (v-23-34) to (v-20-35);
\path[modStyleDGHyperConnector] (v-23-34) to (v-22-36);
\path[modStyleDGHyperConnector] (v-1-13) to (v-34-30);
\path[modStyleDGHyperConnector] (v-22-27) to (v-34-30);
\path[modStyleDGHyperConnector] (v-34-30) to (v-5-32);
\path[modStyleDGHyperConnector] (v-34-30) to (v-20-31);
\path[modStyleDGHyperConnector] (v-1-7) to (v-34-38);
\path[modStyleDGHyperConnector] (v-22-36) to (v-34-38);
\path[modStyleDGHyperConnector] (v-34-38) to (v-5-40);
\path[modStyleDGHyperConnector] (v-34-38) to (v-20-39);
\end{tikzpicture}%
}
  	}%
\caption{A realisability certificate for the flow in Fig.~\ref{fig:realisable_flow_PPP_id}. The input compounds are marked with green and the output compounds are marked with blue. The SMILES strings for all molecule identifiers are listed in Appendix, Table~\ref{tab:smiles}.}
\label{fig:realisable_DAG_PPP_id}
\end{figure}

We believe that our focus on the realisability of pathways may help
raise awareness of the choices one often subconsciously makes when
creating pathway illustrations.

\section{Mathematical Properties of Realisability} \label{sec:math_realisability}

In this section, we take the first steps towards a mathematical theory of the realisability of flows. We begin with a result on realisable flows and prove that if the \textit{König representation} of the 
\textit{flow-induced subhypergraph} of the extended hypergraph and flow $f$ does not have any cycles, then $f$ is realisable.

\begin{definition}[Flow-induced Subhypergraph]
The flow-induced subhypergraph of an extended hypergraph $\overline{\mathcal{H}}=(V,\overline{E})$ and a flow $f$ is the directed hypergraph $\overline{\mathcal{H}}[f] = (V', E')$, with
\begin{align}
\begin{aligned}
  E' &=\{e \in E \mid f(e) \neq 0\} \\
  V' &= \{v \in e^+ \lor v\in e^- \mid e\in E, f(e) \neq 0\} \\
\end{aligned}
\end{align}
\end{definition}

\begin{definition}[König Representation \cite{Andersen:20}]
The König representation of a directed hypergraph $\mathcal{H}=(V,E)$ is the directed multigraph $K(\mathcal{H}) = (V', E')$ where $V'=V \cup E$ and 
\begin{align*}
  E' = \mset{(v,e) \mid e=(e^+,e^-)\in E, v\in e^+} \\
  \cup \mset{(e,v) \mid e=(e^+,e^-)\in E, v\in e^-}
\end{align*}
\end{definition}
In short, the König representation of a hypergraph arises simply by
considering both the circles and boxes of its visualization (in the
style of e.g.\ Fig.~\ref{fig:hypergraph}) as nodes and the arrows as
edges.

\begin{lemma}
If $K(\overline{\mathcal{H}}[f])$ has no cycles, then $f$ is realisable.
\end{lemma}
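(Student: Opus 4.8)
The plan is to turn acyclicity of $K(\overline{\mathcal{H}}[f])$ into a topological ordering of the reactions and then fire them in that order, using a flow-conservation accounting to show every firing is enabled and that the run ends in $M_T$. First I would note that $K(\overline{\mathcal{H}}[f])$ is a finite directed multigraph with no directed cycles, hence a DAG, so it admits a topological sort \cite{cormen:09}. Restricting this order to the transition-nodes gives an enumeration $e_1, e_2, \ldots, e_m$ of $E' = \{e \in E \mid f(e) \neq 0\}$ such that whenever $K$ contains a path $e_i \to v \to e_j$ (that is, $v \in e_i^-$ and $v \in e_j^+$), one has $i < j$. Here I would also flag the direction convention in $K$: an arc $v \to e$ records consumption ($v \in e^+$) and an arc $e \to v$ records production ($v \in e^-$), so acyclicity is exactly the absence of circular production dependencies.

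The load-bearing structural observation is then: if $v$ is a reactant of $e_j$ (i.e.\ $v \in e_j^+$), every reaction producing $v$ precedes $e_j$ in the order, since any such producer $e_i$ yields a path $e_i \to v \to e_j$ and hence $i < j$. Thus, at the instant we are about to fire $e_j$, all production of $v$ by reactions has already taken place.

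Next I would propose the concrete firing sequence: fire each input transition $e^-_v$ exactly $f(e^-_v)$ times; then fire $e_1, \ldots, e_m$ in the topological order, each $e_j$ fired $f(e_j)$ consecutive times; finally fire each output transition $e^+_v$ exactly $f(e^+_v)$ times. Since every external place $v_e$ starts with $f(e)$ tokens, these counts are admissible, and by the flow-conservation constraint \eqref{eq:flow_conservation} the cumulative effect on each molecule place is null, so the only thing to verify is enabledness. I would track the marking on a molecule place $v$: after the inputs and the first $j$ batches,
\[ M(v) = f(e^-_v) + \sum_{i \le j}\bigl(m_v(e_i^-) - m_v(e_i^+)\bigr) f(e_i). \]
When $v \in e_{j+1}^+$, the structural observation gives $\sum_{i \le j} m_v(e_i^-) f(e_i) = \sum_{i} m_v(e_i^-) f(e_i)$ (all producers are already fired), so substituting the flow-conservation identity at $v$, namely $f(e^-_v) + \sum_{i} m_v(e_i^-) f(e_i) = f(e^+_v) + \sum_{i} m_v(e_i^+) f(e_i)$, collapses the marking to $M(v) = f(e^+_v) + \sum_{i > j} m_v(e_i^+) f(e_i) \ge m_v(e_{j+1}^+) f(e_{j+1})$, exactly the amount required to fire $e_{j+1}$ its full $f(e_{j+1})$ times. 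As a molecule place only loses tokens when it is a reactant of the transition being fired, the same bound shows $M(v) \ge 0$ at every step.

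Finally I would close the accounting: after all $m$ batches the flow-conservation identity gives $M(v) = f(e^+_v)$ on each molecule place and $0$ on every external place, so every output transition $e^+_v$ is enabled, and firing them deposits $f(e^+_v)$ tokens on each target place $v_{e^+_v}$ while emptying the molecule places, reaching precisely $M_T$; by Definition~\ref{def:realisability} this witnesses realisability. I expect the main obstacle to be isolating and proving the structural observation cleanly and then performing the flow-conservation substitution that rewrites the reactant marking as a manifestly nonnegative sum of not-yet-consumed production. That rewriting is where acyclicity is genuinely used, and getting the bookkeeping over multiplicities and half-edges right is the only delicate part.
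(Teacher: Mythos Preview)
Your proof is correct and follows essentially the same strategy as the paper: take a topological sort of the K\"onig DAG, fire the reactions in the induced order with the half-edges handled at the ends, and use flow conservation to certify enabledness. The only cosmetic difference is that the paper runs a sweepline over \emph{all} K\"onig nodes and fires an edge when its last source is passed, whereas you restrict the topological order to the edge-nodes and fire them in that order; both schedules realise the same invariant (every producer of a reactant has already fired) and yield valid firing sequences. Your explicit marking computation and the flow-conservation substitution make the enabledness argument more transparent than the paper's invariant sketch.
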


\begin{proof}
Since $K(\overline{\mathcal{H}}[f])$ is a DAG, it has a topological sort. Order the nodes of $\mathcal{H}$ on a line according to this. Create nodes for input (output) ``half-edges'' making them full hyperedges and put these new nodes first (last) in the topological sort. Put the number of tokens specified by the flow on the new input nodes. Create a firing sequence by moving a sweepline across the topological sort and fire a hyperedge when the last node in its source (multi)set is passed. Fire it the number of times specified by the multiplicity of the edge. By the definition of a topological sort, the following holds for any node $v$
\begin{enumerate}[label=(\roman*)]
\item When the sweepline reaches $v$, $v$ has received all its tokens in the flow.
\item Node $v$ only needs to supply tokens after the sweepline has reached $v$.
\item If $v$ is the last node in the sources of a hyperedge, the hyperedge can fire (i.e.\ there are still enough tokens on every node in its source).
\end{enumerate}
Here (i) and (iii) are proven together by induction on the sweepline movements and (ii) is true by construction of the firing sequence.
\end{proof}

There exist flows requiring arbitrary scaling factors:
\begin{theorem}
\label{thm:min-scaled}
For any integer $k > 1$, there exists a flow which is not scaled-realisable for any integer $i<k$ but is scaled-realisable for all integers $i\geq k$.
\end{theorem}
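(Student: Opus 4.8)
The plan is to exhibit, for each $k>1$, an explicit small gadget flow whose only obstruction to realisability is a single ``batch'' reaction that consumes $k$ copies of one species at once, arranged so that at scale $i$ precisely $i$ such copies can ever be made available before that reaction first fires. The threshold is then forced to be exactly $k$.

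Concretely, I would take three species $E$ (a recyclable seed), $M$ (a monomer), and $P$ (an output marker), together with two reactions $m=(\mset{E},\mset{M})$ and $R=(\mset{M,\dots,M},\mset{E,\dots,E,P})$, where the tail of $R$ consists of $k$ copies of $M$ and its head of $k$ copies of $E$ plus one $P$. I set $f(m)=k$, $f(R)=1$, give $E$ both input and output flow $1$, give $P$ output flow $1$, and set all remaining half-edge flows to $0$. A one-line check of the conservation constraint~\eqref{eq:flow_conservation} shows this is a valid flow: at $E$ the $k$ copies produced by $R$ together with the single input exactly match the $k$ consumed by $m$ plus the single output; at $M$ the $k$ produced by $m$ match the $k$ consumed by $R$; and at $P$ the copy produced by $R$ matches the output. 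The point of this bookkeeping is that $R$ regenerates precisely as many $E$'s as $m$ consumes, so $E$ behaves as a conserved seed whose \emph{initial} supply is pinned down by the input flow; the input/output pass-through on $E$ is exactly what lets a fixed seed bootstrap the cycle.

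For the lower bound I would show $i\cdot f$ is not realisable when $1\le i<k$. In the Petri net $(N,M_0)$ the place $M$ receives tokens only through firings of $m$, each of which consumes one token from $E$; the place $E$ in turn receives tokens only from the input transition (at most $i$ times, as its external place carries $i$ tokens) and from $R$. Hence, before $R$ fires for the first time, at most $i$ tokens can ever have been placed on $E$, so at most $i<k$ tokens can ever have been placed on $M$; since $M$ is drained only by $R$, its marking stays strictly below $k$ and $R$ is never enabled. But reaching $M_T$ forces the target place of $P$ to hold $i\ge 1$ tokens, and $P$ is produced solely by $R$; therefore $M_T$ is unreachable, and $i\cdot f$ is not realisable.

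For the upper bound I would give an explicit firing sequence for $i\ge k$: first fire the input transition of $E$ exactly $i$ times to reach $i$ tokens on $E$, then repeat the block ``fire $m$ a total of $k$ times, then fire $R$ once'' for $i$ rounds, and finally fire the two output transitions $i$ times each. Each block is enabled because $i\ge k$, leaves the marking of $E$ back at $i$ (the $k$ tokens spent on the $m$-firings are returned by $R$), empties $M$, and increments $P$, so after $i$ blocks we have fired $m$ exactly $ik$ times, $R$ exactly $i$ times, and produced $i$ tokens of $P$; the closing output firings then reach $M_T$ exactly. I expect the main obstacle to be making the lower bound airtight: one must verify that no reordering can place $k$ tokens on $M$ before the first $R$-firing (ruling out schedules that delay input or drain $E$ early, which can only lower the available counts), and one must be careful that the presence of the species $P$ genuinely forces $R$ to fire, since this is what excludes the degenerate schedule that simply passes the input $E$ straight to its own output without ever building up $M$.
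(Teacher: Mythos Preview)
Your proposal is correct and follows essentially the same approach as the paper: a recyclable seed species with input/output flow~$1$ feeds a batch reaction requiring $k$ derived copies, so at scale~$i$ exactly $i$ such copies can be produced before the batch reaction first fires, pinning the threshold at~$k$. The paper's gadget differs only cosmetically---it fans the seed out to $k$ \emph{distinct} intermediate species via $k$ flow-$1$ edges (rather than your single flow-$k$ edge to one intermediate~$M$) and omits your output marker~$P$, relying instead on the external-place bookkeeping of the batch edge itself to force it to fire---but the threshold argument and the cycling firing sequence for $i\ge k$ are identical.
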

\begin{proof}
One family of such flows is represented by
Fig.~\ref{fig:scaled_large}, which fulfils the statement for $k=4$:
This flow is not scaled-realisable for $i<4$ as all of $B,C,D$ and $E$ need to be present for $r$ to fire in the corresponding Petri net. Therefore, there needs to be at least $4$ tokens input to $A$.
To prove that the flow is scaled-realisable for any integer $i\geq 4$, input $i$ tokens to $A$ and output $i-4$ of them from $A$, such that $4$ tokens still reside on $A$. Fire the sequence $bcder$. There are now $4$ tokens on $A$ again. Repeat the firing of the sequence $bcder$ until it has been fired $k$ times, then output the remaining $4$ tokens on $A$. Clearly, the construction of Fig.~\ref{fig:scaled_large} generalises to any $k>1$. If one would like to avoid the unbounded size of the hyperedge~$r$ in the family, a binary tree structure can be added on both sides of~$r$ (first merging $k$ nodes into one, the expanding this into $k$ nodes connected to $A$).
\end{proof}

\begin{figure}[tbp]
  \centering
  \begin{tikzpicture}[abstractNetwork]
    \node[vertex] (A) {$A$};
    \node[hyperEdge, above right = of A, yshift = -2.6em, label = above : $c$] (a2) {$1$};
    \node[hyperEdge, below right = of A, yshift = 2.6em, label = above : $d$] (a3) {$1$};
    \node[hyperEdge, above right = of A, yshift = -0.3em, label = above : $b$] (a1) {$1$};
    \node[hyperEdge, below right = of A, yshift = 0.3em, label = above : $e$] (a4) {$1$};
    \node[vertex, right = of a1] (B) {$B$};
    \node[vertex, right = of a2] (C) {$C$};
    \node[vertex, right = of a3] (D) {$D$};
    \node[vertex, right = of a4] (E) {$E$};
    \node[hyperEdge, below right = of C, yshift = 3em, label = right : $r$] (r) {$1$};
    
    \node[vertexBase, overlay, left = of A] (A_in) {\phantom{}};
    \node[vertexBase, overlay, above = of A] (A_out) {\phantom{}};
    
    \draw[arrow] (A_in) --  node[above] {$1$} (A);
    \draw[arrow] (A) --  node[left] {$1$} (A_out);
    \draw[arrow] (A) -- (a1);
    \draw[arrow] (A) -- (a2);
    \draw[arrow] (A) -- (a3);
    \draw[arrow] (A) -- (a4);
    \draw[arrow] (a1) -- (B);
    \draw[arrow] (a2) -- (C);
    \draw[arrow] (a3) -- (D);
    \draw[arrow] (a4) -- (E);
    \draw[arrow] (B) -- (r);
    \draw[arrow] (C) -- (r);
    \draw[arrow] (D) -- (r);
    \draw[arrow] (E) -- (r);
    
    \node[vertexBase, overlay, yshift= 2em] at ($(a1)!0.5!(B)$) (mid_top) {\phantom{}};
    \draw[arrow] (r.north) [bend right=39] to (mid_top.center) [bend right=39] to (A);
    
    \node[vertexBase, overlay, above = of mid_top, yshift = -3em] (mid_top_top) {\phantom{}};
    \draw[arrow] (r.north) [bend right=43] to (mid_top_top.center) [bend right=43] to (A);
    
    \node[vertexBase, overlay, yshift= -2em] at ($(a4)!0.5!(E)$) (mid_bottom) {\phantom{}};
    \draw[arrow] (r.south) [bend left=40] to (mid_bottom.center) [bend left=40] to (A);
    
    \node[vertexBase, overlay, below = of mid_bottom, yshift = 3em] (mid_bottom_bottom) {\phantom{}};
    \draw[arrow] (r.south) [bend left=44] to (mid_bottom_bottom.center) [bend left=44] to (A);
  \end{tikzpicture}%
  \caption{A flow which is not scaled-realisable for an integer $k < 4$ but is for any integer $k\geq 4$.}
\label{fig:scaled_large}
\end{figure}

There also exist flows not scaled-realisable for any factor:
\begin{theorem}
 \label{thm:not_scaled}
 The flow in Fig.~\ref{fig:not_scaled} is not scaled-realisable for any integer $k\geq 1$.
\end{theorem}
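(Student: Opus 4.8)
The plan is to locate in Fig.~\ref{fig:not_scaled} a scale-invariant obstruction to firing, of the same flavour as the one behind Fig.~\ref{fig:missing_catalyst}, and then argue that no amount of scaling can remove it. Concretely, I would identify the set $S\subseteq V$ of molecules lying on a directed cycle of the König representation $K(\overline{\mathcal{H}}[f])$ that receive no input flow, i.e.\ $f(e^-_v)=0$ for every $v\in S$. Reading the cycle off the picture, each molecule of $S$ is produced only by reactions that in turn consume a molecule of $S$. In Petri-net terms this says that, among the transitions that can ever fire, $S$ is a \emph{siphon}: every transition whose head meets $S$ (a transition producing into $S$) also has its tail meeting $S$ (it consumes from $S$). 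Here the ``transitions that can ever fire'' are exactly those $e$ with $f(e)>0$, since the external place $v_e$ of any edge with $f(e)=0$ holds no token and is never replenished, so such transitions can never fire.

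Next I would fix an arbitrary scaling factor $k\ge 1$ and pass to the Petri net $(N,M_0)$ built from $(\overline{\mathcal{H}},k\cdot f)$. By construction the initial marking puts tokens only on the external places $V_E$ and sets $M_0(v)=0$ for every $v\in V$, so every place of $S$ starts empty; moreover the input channels feeding $S$ carry $k\cdot f(e^-_v)=0$ tokens, so this is unchanged by scaling. Thus $S$ is an \emph{unmarked} siphon for every value of $k$. I would then show by induction along any firing sequence that $S$ stays unmarked: take the first transition in the sequence whose firing would deposit a token in $S$; being a producer into $S$ it is, by the siphon property, also a consumer from $S$, hence to be enabled it needs a token on some place of $S$; but all places of $S$ have been empty up to this point, so the transition is not enabled, a contradiction. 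Consequently no token ever reaches $S$, and every cycle reaction, each of which consumes a molecule of $S$, can never fire.

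To conclude I would combine this with the observation from Section~\ref{sec:flows_as_nets} that any firing sequence $M_0\xrightarrow{\sigma}M_T$ must use each transition exactly the number of times prescribed by the flow. The cycle reactions have positive flow, so reaching $M_T$ from $(N,M_0)$ would force each of them to fire at least $k\cdot f(e)\ge 1$ times, which the previous step rules out. Hence $M_T$ is unreachable, $k\cdot f$ is not realisable, and as $k\ge 1$ was arbitrary the flow is not scaled-realisable. The step requiring the most care is the first one: verifying from the concrete network of Fig.~\ref{fig:not_scaled} that the chosen cycle really carries no input flow and that \emph{no} firing-capable reaction produces into $S$ without also consuming from it, since it is precisely this siphon property---and its insensitivity to the scaling factor---that makes the obstruction permanent rather than an artefact of the unit flow.
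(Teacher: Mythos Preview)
Your approach has a genuine gap: the siphon/unmarked-cycle argument you describe does not apply to the flow in Fig.~\ref{fig:not_scaled}. That network has only two molecules, $A$ and $D$, and a single reaction $r\colon \mset{A,A}\to\mset{A,D}$. The only directed cycle in the K\"onig representation is $A\to r\to A$, and $A$ \emph{does} receive input flow ($f(e^-_A)=1$). Hence there is no nonempty set $S$ of molecules on a cycle with $f(e^-_v)=0$ for all $v\in S$, and the siphon you need does not exist. In fact, the paper remarks right after Thm.~\ref{thm:non-scaled} that the reachability-style obstruction you are invoking is only sufficient, not necessary, and that Fig.~\ref{fig:not_scaled} is precisely an example where all vertices are reachable yet the flow is still not scaled-realisable.

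The actual obstruction is quantitative, not structural. Scaling by $s$ gives $s$ input tokens on $A$ and requires $r$ to fire $s$ times. Each firing of $r$ consumes two tokens from $A$ and returns one, so just before the $k$th firing at most $s+(k-1)$ tokens have ever been available on $A$, while $k$ firings need $2k$ consumptions; this forces $s\ge k+1$, and taking $k=s$ yields a contradiction. Your write-up needs this counting argument (or an equivalent invariant such as ``the number of tokens on $A$ strictly decreases with every firing of $r$, so $r$ can fire at most $s-1$ times''), not a siphon argument.
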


\begin{figure}[tbp]
  \centering
  \begin{tikzpicture}[abstractNetwork]
    \node[vertex] (A) {$A$};
    \node[hyperEdge, right = of A, label = above : $r$] (r) {$1$};
    \node[vertex, right=of r] (D) {$D$};
    \node[vertexBase, overlay, left = of A] (A_in) {\phantom{}};
    \node[vertexBase, overlay, right = of D] (D_out) {\phantom{}};
    
    \draw[arrow] (A_in) --  node[above] {$1$} (A);
    \draw[arrow] (A) [bend left = 10] to (r);
    \draw[arrow] (A) [bend right = 10] to (r);
    \draw[arrow] (r.south) [bend left = 30] to (A);
    \draw[arrow] (r) -- (D);
    \draw[arrow] (D) --  node[above] {$1$} (D_out);
  \end{tikzpicture}%
  \caption{Simple flow that is not scaled-realisable.}
\label{fig:not_scaled}
\end{figure}

\begin{proof}
Assume that the flow is scaled-realisable for some factor $s$. In the firing sequence that realises the flow, consider the point in time just before the $k$'th firing of $r$. Then at most $s+(k-1)$ tokens on $A$ have been available. For $r$ to now happen the $k$'th time at least $2k$ tokens on $A$ have been available (in order to make the necessary firings of $r$). Hence $s+(k-1) \geq 2k \Leftrightarrow s-1 \geq k$. So $s$ executions of $r$ is not possible.
\end{proof}

We now give an easily checkable condition which if true implies that a flow is not scaled-realisable for any factor. In short, the condition is that at least one vertex of the flow cannot be reached during a graph traversal from its source set.

In more detail: Consider a directed hypergraph $\mathcal{H}=(V,E)$. The
set $R(H,S)$ of vertices reachable from $S$ in $\mathcal{H}$ is
defined by (i) $S\subseteq R(H,S)$ and (ii) if $e^+\subseteq R(H,S)$
for some $e\in E$, then $e^- \subseteq R(H,S)$. It can be computed
using the traversal procedure specified in Alg.~\ref{alg:GT}. Setting
$w=\max_{e\in E} |e^+|$, Alg.~\ref{alg:GT} runs in $O(w|E|^2)$ time,
since checking the condition of the ``while'' loop require $O(w|E|)$
time, and in every iteration, $F$ shrinks by one edge. 

\begin{algorithm}
\SetAlgoLined
\caption{Traversal of a directed hypergraph}
\label{alg:GT}
\DontPrintSemicolon
\KwData{$H=(V,E),S$}
$G\leftarrow S$\\
$F\leftarrow E$

\While{$\exists e\in F\colon e^+\subseteq G$}{
  $G\leftarrow G\cup e^-$\\
  $F\leftarrow F\setminus\{e\}$
}
\Return $G$
\end{algorithm}

\begin{theorem}
  \label{thm:non-scaled}
  Given a flow $f$ on an extended hypergraph
  $\overline{\mathcal{H}}=(V,\overline{E})$, let $\overline{\mathcal{H}}[f]=(V', E')$ be the flow induced subhypergraph and let $S$ be the source set such that
  \begin{align}
    \begin{aligned}
      S  &= \{v\in V \mid f(e_v^-) \neq 0\} \\
    \end{aligned}
  \end{align}
  If there exists a vertex $v \in V'$ that is not returned by the traversal
  procedure on the graph $\overline{\mathcal{H}}[f]$ and source set $S$, then the flow $f$ is not
  scaled-realisable.
\end{theorem}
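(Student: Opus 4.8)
The plan is to argue uniformly over all scaling factors by working inside the Petri net. Fix an arbitrary $k\ge 1$; since scaling by $k$ changes neither $E'$, $V'$, nor $S$, the flow-induced subhypergraph $\overline{\mathcal{H}}[f]$ and the reachable set $R := R(\overline{\mathcal{H}}[f], S)$ returned by Alg.~\ref{alg:GT} are the same for $f$ and $k\cdot f$. Assuming some $v\in V'$ with $v\notin R$ exists, I take an arbitrary firing sequence $M_0\xrightarrow{\sigma}M_T$ in the net built from $(\overline{\mathcal{H}}, k\cdot f)$ and derive a contradiction. The heart of the argument is the invariant: \emph{throughout $\sigma$, no molecule place $v\in V\setminus R$ ever carries a positive number of tokens.}

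First I would establish this invariant by induction on the prefixes of $\sigma$. The base case holds since $M_0(v)=0$ for every molecule place $v\in V$. A token can be deposited on a molecule place $v$ only by firing either (a) an input transition $e_v^-$, or (b) a reaction transition $e\in E$ with $v\in e^-$. In case (a), firing requires the external place of $e_v^-$ to be non-empty, i.e.\ $k\,f(e_v^-)>0$, so $v\in S\subseteq R$. In case (b), the enabling condition forces $M(w)\ge m_w(e^+)>0$ for every $w\in e^+$, so $e^+\subseteq R$ by the induction hypothesis; moreover $e$ fires only if its external place is non-empty, hence $f(e)\neq 0$ and $e\in E'$. The closure rule defining $R$ on $\overline{\mathcal{H}}[f]$ then gives $e^-\subseteq R$, so the newly marked vertex again lies in $R$. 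In both cases no token is placed outside $R$, completing the induction.

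Next I would combine the invariant with the flow-conservation constraint \eqref{eq:flow_conservation}. Take $v\in V'$ with $v\notin R$. Since $S\subseteq R$ we have $v\notin S$, so $f(e_v^-)=0$ and the entire in-flow at $v$ comes from reaction edges. Because $v\in V'$, it lies in $e^+$ or $e^-$ of some $e\in E'$; in either case conservation forces the out-flow at $v$ to be strictly positive, i.e.\ either some reaction $e$ with $v\in e^+$ has $f(e)>0$, or the output transition $e_v^+$ has $f(e_v^+)>0$. In both cases there is a transition $t$ whose pre-set contains $v$ with positive weight and which must fire a positive number of times ($k\,f(\cdot)>0$) in any sequence realising $k\cdot f$, since otherwise the external place of $t$ cannot be drained and the target marking $M_T$ cannot be reached. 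But firing $t$ needs a positive token count on $v$, which the invariant forbids. This contradiction shows $k\cdot f$ is not realisable for any $k\ge 1$, i.e.\ $f$ is not scaled-realisable.

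The main obstacle is the bookkeeping of the extended-net construction: I must make sure the induction exhausts \emph{every} way a token can appear on a molecule place (inputs and reaction heads only, never the external or target places, which feed transitions but are never the target of an edge into $V$), and that ``$t$ must fire'' is justified from reachability of $M_T$ rather than merely from conservation of $f$. The conservation step also requires the subtle observation that a head-only vertex of $V'$ (one appearing in no tail) must still be consumed by its output transition, which is precisely where the hypothesis $v\notin S$, giving $f(e_v^-)=0$, is used. Everything else is routine induction and a short case analysis, and crucially is independent of $k$, which is what makes the non-realisability conclusion uniform across all scaling factors.
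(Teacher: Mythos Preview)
Your proof is correct and follows the same underlying idea as the paper's: vertices outside $R$ never receive tokens in any firing sequence, so transitions that require them cannot fire and $M_T$ is unreachable at every scale. The paper's own proof is only an informal sketch of this domination argument (the traversal is the ``unrestricted'' case and hence bounds any actual execution), whereas you carry it out rigorously via an explicit induction on firing prefixes and close the contradiction through flow conservation and the external-place bookkeeping---more precise, but not a different strategy.
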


\begin{proof}
 \label{thm:scaled}
  The flow-induced subhypergraph $\overline{\mathcal{H}}[f]$ has as edges
  the internal edges of $\overline{\mathcal{H}}$ on which there is flow, and
  as vertices the vertices of $\overline{\mathcal{H}}$
  that have either in- or out flow, without regard to the half-edges.
  The source set $S$ contains the vertices of $\overline{\mathcal{H}}$
  with input flow according to $f$. The traversal specified in Alg.~\ref{alg:GT} corresponds
  to having an infinite amount of flow into the vertices in $S$ and no
  restrictions on the number of times an edge can be followed. 
  Therefore, if a vertex is not reachable from
  $S$ by Alg.~\ref{alg:GT}, it is also not reachable in the stricter case,
  where the search is restricted by the flow specification.  Note that the
  omission of the edges on which there is no flow, as well as the vertices
  for which all internal edges entering or leaving them has no flow,
  is crucial in order to let Alg.~\ref{alg:GT} mimic the operations of a
  scaled flow. Otherwise, there might be ways of visiting the vertices, which
  would not be possible if only considering the paths represented
  by the flow specification. Moreover, observe that the omission of
  vertices which only have in- and outflow does not affect the result of
  the algorithm as these would be trivially visited.
\end{proof}

We remark that Thm.~\ref{thm:non-scaled} only provides a sufficient condition for determining non-scaled-realisable flows and not a necessary condition. This follows from the the flow in Fig.~\ref{fig:not_scaled}: during graph traversal, this flow will have all its vertices visited, but by Thm.~\ref{thm:not_scaled} it is not scaled-realisable for any factor.

The property of being scaled-realisable is closed under addition of the scaling factors:
\begin{theorem} \label{thm:scaled_sum}
If a flow $f$ is scaled-realisable for an integer $k$ and an integer $l$, then it is also scaled-realisable for $k+l$.
\end{theorem}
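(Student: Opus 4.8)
The plan is to prove that scaled-realisability is closed under addition of scaling factors by exploiting the fact that realisability certificates can be run concurrently, or more directly, by concatenating firing sequences. Since $f$ is scaled-realisable for $k$, there is a firing sequence $\sigma_k$ realising $k\cdot f$ in the Petri net $(N, M_0^{(k)})$ constructed from $(\overline{\mathcal{H}}, k\cdot f)$, taking the initial marking $M_0^{(k)}$ to the target marking $M_T^{(k)}$. Likewise there is a firing sequence $\sigma_l$ realising $l\cdot f$. The goal is to build a single firing sequence realising $(k+l)\cdot f$.

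\medskip

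First I would recall how the initial and target markings scale. In the construction of Section~\ref{sec:flows_as_nets}, the initial marking places $f(e)$ tokens on each external place $v_e$, and the target marking requires $f(e^+)$ tokens on each target place. Since these marking values depend \emph{linearly} on the flow, the Petri net for $(k+l)\cdot f$ has exactly the same net structure $N$ as the nets for $k\cdot f$ and $l\cdot f$; only the token counts on the external and target places differ, with $M_0^{(k+l)}(v_e) = (k+l)f(e) = M_0^{(k)}(v_e) + M_0^{(l)}(v_e)$ and similarly for $M_T^{(k+l)}$. Thus all three Petri nets share the underlying net, and the markings for the sum are the pointwise sums of the markings for $k$ and for $l$.

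\medskip

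The key step is then a monotonicity argument for firing sequences. Petri nets enjoy the property that if a firing sequence $\sigma$ is enabled at a marking $M$, then $\sigma$ is also enabled at any larger marking $M + M'$ (adding tokens never disables a transition), and firing $\sigma$ from $M+M'$ yields $(M+M') $ transformed by the same net effect, i.e.\ it reaches the marking that results from firing $\sigma$ at $M$ plus the untouched surplus $M'$. Concretely, I would run $\sigma_k$ first from the combined initial marking $M_0^{(k+l)} = M_0^{(k)} + M_0^{(l)}$; because the extra tokens contributed by the $l$-part sit on external places and only add to what is available, $\sigma_k$ remains enabled and leads to $M_T^{(k)} + M_0^{(l)}$ (the target tokens produced by the $k$-run plus the still-untouched $l$-part of the initial marking). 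From this marking I would then run $\sigma_l$: the tokens on the target places produced by the first run play no role in enabling the transitions of $\sigma_l$ (target places have empty post-sets and feed no transition), so $\sigma_l$ proceeds exactly as it would from $M_0^{(l)}$, consuming the remaining external tokens and producing its share of target tokens. The final marking is $M_T^{(k)} + M_T^{(l)} = M_T^{(k+l)}$, so the concatenation $\sigma_k\sigma_l$ realises $(k+l)\cdot f$.

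\medskip

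The main obstacle I anticipate is making the surplus-token bookkeeping rigorous, in particular verifying that the leftover tokens from the $l$-part genuinely do not interfere with the firing of $\sigma_k$ and vice versa. This hinges on the special structure of the extended net: each transition consumes tokens only from internal places (vertices) and from its single external place $v_e$, and since the external place for each edge is initialised with exactly the scaled flow value, the two ``copies'' of the flow draw from disjoint token budgets on each external place while sharing the internal places. The cleanest way to discharge this is to invoke the standard additivity/monotonicity lemma for Petri net markings rather than tracking tokens by hand, and to note explicitly that target places are sinks, so accumulated target tokens never enable or disable anything. An induction on $k$ using Thm.~\ref{thm:scaled_sum} itself then extends the result to arbitrary finite sums, though for the statement as given the single concatenation $\sigma_k\sigma_l$ suffices.
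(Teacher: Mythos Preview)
Your argument is correct: the concatenation $\sigma_k\sigma_l$ does realise $(k+l)\cdot f$, and the monotonicity property you invoke (if $M\xrightarrow{\sigma}M''$ then $M+M'\xrightarrow{\sigma}M''+M'$) is exactly what is needed to make the bookkeeping rigorous. Your observation that target places are sinks, so the tokens deposited by the first run cannot interfere with the second, closes the only potential loophole.

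The paper, however, takes a different and shorter route: rather than concatenating firing sequences, it forms the \emph{disjoint union} of the two realisability certificates for $k\cdot f$ and $l\cdot f$. Since a realisability certificate is a process (an occurrence net with a mapping into $N$), the disjoint union of two occurrence nets is trivially again an occurrence net (acyclicity and the branching bound $|{}^\bullet p|,|p^\bullet|\le 1$ are preserved under disjoint union), and the process conditions on the mapping $q$ are additive in the initial marking. This yields a certificate for $(k+l)\cdot f$ in one line, with no need to reason about enabledness or surplus tokens. Your approach is a \emph{sequential} composition at the level of firing sequences, whereas the paper's is a \emph{parallel} composition at the level of processes. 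Yours is arguably more elementary---it avoids the occurrence-net machinery entirely and works purely with the reachability definition---while the paper's exploits the structural definition of a certificate to sidestep any token accounting.
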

\begin{proof}
Create a realisability certificate for $(k+l)\cdot f$ as the disjoint union of the realisability certificate for $f'= k \cdot f$ and the realisability certificate for $f'' = l \cdot f$.
\end{proof}

The family of flows from the proof of Thm.~\ref{thm:min-scaled} has the following interesting property.
\begin{definition}[Monotone Scaled-Realisable]
A flow $f$ is monotone scaled-realisable iff it is scaled-realisable for all integers $j\geq k$, where $k$ is the smallest factor for which it is scaled-realisable.
\end{definition}
A natural question now arises whether all scaled-realisable flows are also
monotone scaled-realisable.  We did a computer-based search for
counter-examples, but found none.

In more detail, we generated several pseudo-random directed
hypergraphs in which we found a large number of different flows using the software package \textit{MØD} \cite{mod0.5,mod} which has a functionality for executing flow queries for hypergraphs via ILP \cite{flow}. We tested these flows for realisability and among the flows
not directly realisable, we looked at those which were
scaled-realisable with a smallest scale factor $k=2$ or $k=3$.  If the
lowest factor was $k=2$, we tested if the flow was also
scaled-realisable for factor $j=3$.  If the lowest factor was $k=3$,
we tested if the flow was also scaled-realisable for factors $j$ where
$3 < j \leq 5$.  If so, we by Thm.~\ref{thm:scaled_sum} knew that the
flow was monotone scaled-realisable. If not, we would have found a
counter-example. Among the $1688$ scaled-realisable flows studied, we
found them all to be monotone scaled-realisable.

We thus close this section with the following conjecture:
\begin{conjecture}
All scaled-realisable flows are monotone scaled-realisable.
\end{conjecture}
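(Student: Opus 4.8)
The plan is to study the set $S=\{k\in\mathbb{N}_{\ge 1}\mid k\cdot f\text{ is realisable}\}$ of admissible scaling factors of a fixed scaled-realisable flow $f$. By Thm.~\ref{thm:scaled_sum} this set is closed under addition, so it is a sub-semigroup of $(\mathbb{N},+)$, and since $f$ is scaled-realisable it is non-empty with a well-defined least element $k_0=\min S$. The conjecture is exactly the assertion $S=\{j\mid j\ge k_0\}$. I would stress first that additive closure \emph{alone} does not force this: an abstract additively closed set such as the even numbers has gaps above its minimum. Hence the whole content of the conjecture lies in excluding such ``periodic'' behaviour for the semigroups that actually arise from realisability.

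I would reduce the statement to an \emph{increment lemma}: if $k\in S$ and $k\ge k_0$, then $k+1\in S$. Together with $k_0\in S$ this yields $\{j\mid j\ge k_0\}\subseteq S$ by induction, and the reverse inclusion is trivial, giving monotonicity. An equivalent target that would also suffice is to show that $S$ contains the $k_0$ consecutive integers $k_0,k_0+1,\dots,2k_0-1$, since adding multiples of $k_0$ then fills the whole ray $[k_0,\infty)$; but this route requires the same structural insight as the increment lemma.

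Next I would attempt the increment lemma constructively. Fix a realising firing sequence $M_0\xrightarrow{\sigma}M_T$ for $k\cdot f$, equivalently its realisability certificate. The goal is to weave in the reactions of one additional copy of $f$. A single copy of $f$ need not be realisable on its own, but every flow is borrow-realisable \citep[Proposition 10]{desel:1998}, so there is a borrowing multiset $B$ such that $f$ becomes realisable once the tokens of $B$ are present at the outset and returned at the end. The idea is to use, as this temporary \emph{internal} supply $B$, spare intermediate tokens that are already present at some marking $M^\ast$ along the trajectory $\sigma$: run $\sigma$ up to $M^\ast$, pause, execute the borrow-realisation of the extra copy by drawing the tokens of $B$ from $M^\ast$ and restoring them, then resume $\sigma$. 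Because the borrowed tokens are repaid before $\sigma$ continues, token non-negativity of the spliced sequence reduces to the two facts that $M^\ast$ dominates $B$ and that $\sigma$ itself never goes negative.

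The hard part will be guaranteeing the existence of such a marking $M^\ast$: one must show that every realisation of $k\cdot f$ (for $k\ge k_0$) necessarily exposes, at a single instant, enough simultaneously available intermediate tokens to cover a full borrowing $B$ for one further copy. This amounts to a quantitative relationship bounding the minimal borrowing requirement of $f$ against the ``slack'' present in a $k$-scaled realisation, and it is precisely here that a counterexample would live if the conjecture were false---a flow whose realisations for $k\in S$ never accumulate enough slack to bootstrap a single extra copy would create a gap in $S$ above $k_0$. The reported search over $1688$ scaled-realisable flows, all found monotone, is strong evidence that no such obstruction occurs, but I do not see how to establish the required bound in full generality. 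I would therefore first calibrate the construction on the family of Fig.~\ref{fig:scaled_large}, where the spare tokens conveniently accumulate on the single vertex $A$, before attempting the general case.
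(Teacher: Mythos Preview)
The statement you are addressing is a \emph{conjecture}, not a theorem: the paper does not prove it. The authors only report computational evidence (all $1688$ scaled-realisable flows tested were monotone scaled-realisable) and explicitly close the section by posing the claim as open. There is therefore no ``paper's own proof'' to compare your attempt against.

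Your write-up is appropriately calibrated to this: you lay out a plausible strategy (reduce to an increment lemma, attempt to splice a borrow-realisation of one extra copy of $f$ into a realisation of $k\cdot f$ using intermediate slack), and you correctly isolate the obstruction---proving that some marking $M^\ast$ along the trajectory dominates a suitable borrowing multiset $B$. You also rightly observe that Thm.~\ref{thm:scaled_sum} alone is insufficient, since additively closed subsets of $\mathbb{N}$ need not be co-finite above their minimum. Your honest acknowledgement that you ``do not see how to establish the required bound in full generality'' is exactly right: this is the whole content of the conjecture, and neither you nor the paper resolves it.

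One minor caution on the strategy itself: the borrowing multiset $B$ guaranteed by \citep[Proposition~10]{desel:1998} is not canonical, and a priori there is no reason the \emph{minimal} such $B$ should be bounded by the slack available in a $k_0$-scaled realisation. If you pursue this line, you would likely need to tie the choice of $B$ to the structure of a specific realisability certificate for $k_0\cdot f$ rather than treat $B$ as given independently.
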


\section{Conclusion}
We introduced here a concept of realisability of a pathway given as an
flow by converting the flow to a Petri net.  The
question of realisability can then be rephrased as a question of
reachability in the Petri net, leading to notions of realisable,
scaled-realisable, and borrow-realisable flows. The method is
essential if one is interested in finding alternative realisable pathways
to those already known by chemists. Reachability in Petri nets and
equivalent formal systems is an active field of research, see
e.g.\ \cite{Alaniz:22} and the references therein. Many of the relevant
reachability problems $M\xrightarrow{*} M'$ are hard for arbitrary
markings. It remains a relevant question for future work to see if
restrictions imposed by chemistry, in particular conservation of mass, suffice to make the problems easier.

An interesting direction for future research is extending the framework to allow for atom tracing in CRNs. While current Petri net methods allows us to track individual tokens/molecules \cite{glabbeek:05}, full atom tracing requires enumerating {\em all} possible firing sequences, i.e., {\em all} witness paths, which existing Petri net tools do not currently provide. On the other hand, atom-atom mapping, i.e., how atoms rearrange during reactions, is already available through an existing graph transformation framework \textit{MØD} \cite{mod0.5,mod}. Such a combination of witness path enumeration and atom-atom mapping is crucial for tracking isotopic labels and understanding reaction mechanisms, and would significantly enhance the model's applicability in systems chemistry, metabolic engineering, and synthetic biology.

\section*{Acknowledgments}
An early version of this paper was published as part of the ISBRA: International Symposium on Bioinformatics Research and Applications (ISBRA 2023) \cite{realisability}.

\section*{Authorship Contribution Statement}
Jakob L. Andersen: Conceptualisation, Methodology, Software, Writing - Original Draft, Writing - Review \& Editing, Supervision. Sissel Banke: Methodology, Writing - Original Draft, Writing - Review \& Editing, Visualisation. Rolf Fagerberg: Conceptualisation, Methodology, Writing - Review \& Editing, Supervision. Christoph Flamm: Conceptualisation. Daniel Merkle: Conceptualisation, Methodology, Writing - Review \& Editing, Supervision. Peter F. Stadler: Conceptualisation, Writing - Review \& Editing.

\section*{Author Disclosure Statement}
The authors have no conflict of interest to declare.

\section*{Funding Information}
This work is supported by the Novo Nordisk Foundation grant NNF19OC0057834 and by the Independent Research Fund Denmark, Natural Sciences, grant DFF-0135-00420B.

\bibliographystyle{plainnat}
\bibliography{paper}

\clearpage
\appendix
\section*{Appendix}
\subsection*{Molecular Structures} \label{sec:structure}
We have omitted the structure of molecules for brevity in some examples in the
paper. This is a large simplification in comparison to the level of detail
handled by our complete framework, which includes a generative approach to
creating chemical reaction networks~\cite{mod0.5}.  Each vertex in the
directed hypergraphs is an undirected labelled graph representing a
molecular structure. Each edge in the directed hyperedge correspond to a
Double Pushout (DPO) transformation of such a graph~\cite{mod0.5}. In
Table~\ref{tab:smiles} we present the correspondence between the ID and
structure of the compounds used throughout the paper.
In Fig.~\ref{fig:DPO} we show an example of a DPO diagram that represents a
reaction. Since the spans in the DPO representation in particular define a
bijection of the vertices (atoms) in tail and head of an hyperedge, it
determines the atom-maps of a reaction.
\begin{figure}[h]
\centering
\renewcommand\modInputPrefix{data/dpo/out}%
\dpoDerivationTwo[scale=0.5]{data/dpo/out/002_r_12_11301100_0}{data/dpo/out/004_r_14_11301100_0}
\caption{Double Pushout diagram for the edge $(\mset{\mathrm{G3P}},
  \mset{\mathrm{DHAP}})$ from Fig.~\ref{fig:borrow_realisable_flow_PPP}.
  Atoms in corresponding locations are mapped onto each in both the rule
  (top) and its application to complete molecules (bottom).}
\label{fig:DPO}
\end{figure}

\begin{sidewaystable}[tbp]
\centering
\caption{SMILES strings of the molecules used throughout the paper.}
\label{tab:smiles}
\begin{tabular}{@{}l@{\hspace{1em}}l@{\hspace{1em}}l@{}}
\toprule
ID  & Name     & SMILES                             \\
\midrule
2Hex     & Hex-2-ulose & \texttt{C(CO)(C(C(C(CO)O)O)O)=O}            \\
Aldohex	& Aldohexose &\texttt{C(C(C(C(C(CO)O)O)O)O)=O}	\\
BuTet	& 1-Butene-1,2,3,4-tetrol &\texttt{C(O)=C(C(CO)O)O}	\\
DHAP   & Dihydroxyacetone phosphate & \texttt{OP(O)(=O)OCC(=O)CO}                 \\
E4	& Threose &\texttt{C(C(C(CO)O)O)=O} 	\\
E4P      &Erythrose 4-phosphate& \texttt{OP(O)(=O)OCC(O)C(O)C=O}             \\
EnolHex	& Enol-hexose  &\texttt{C(O)=C(C(C(C(CO)O)O)O)O}	\\
Erythrulose	&Erythrulose& \texttt{C(CO)(C(CO)O)=O} \\
EtD	& 1,2-Ethenediol &\texttt{C(O)=CO}	\\
F1P      &Fructose 1-Phosphate & \texttt{C(COP(O)(O)=O)(C(C(C(CO)O)O)O)=O}   \\
F6P      &Fructose 6-Phosphate & \texttt{OCC(=O)C(O)C(O)C(O)COP(=O)(O)O}     \\
FBP & Fructose 1,6-bisphosphate & \texttt{OC(COP(O)(O)=O)C(O)C(O)C(COP(O)(O)=O)=O}\\
Formald 	& Formaldehyde &\texttt{C=O}	\\
G3P      & Glyceraldehyde 3-phosphate &\texttt{C(C(C=O)O)OP(=O)(O)O}               \\
Glyald   &Glyceraldehyde &\texttt{C(C(CO)O)=O}                        \\
Glycoald	&Glycolaldehyde& \texttt{OCC=O}	\\
Propenetriol	&Prop-1-ene-1,2,3-triol& \texttt{C(O)=C(CO)O}		\\
R5P      &Ribose 5-phosphate &\texttt{OP(O)(=O)OCC(O)C(O)C(O)C=O}         \\
Ru5P & Ribulose-5-Phosphate / Xylulose 5-Phosphate &\texttt{OCC(=O)C(O)C(O)COP(=O)(O)O}         \\
S7P      &Sedoheptulose 7-phosphate& \texttt{O=P(O)(OCC(O)C(O)C(O)C(O)C(=O)CO)O} \\
\bottomrule
\end{tabular}
\end{sidewaystable}

\clearpage
\subsection*{Scaled-Realisable Formose Pathway with Molecular Structures}
Here we illustrate the scaled-realisable formose pathway from Fig.~\ref{fig:scaled_realisable_flow_formose_id} 
with molecular structures in Fig.~\ref{fig:scaled_realisable_flow_formose_no_id} as well as the realisability certificate 
for said flow from Fig.~\ref{fig:scaled_realisable_DAG_formose_id} with the structure of the molecules in Fig.~\ref{fig:scaled_realisable_DAG_formose_no_id}.

\begin{figure}[h]
\centering
\resizebox{\textwidth}{!}{%
  \renewcommand\modInputPrefix{data/formose_scaled_flow}%
  	\input{data/formose_scaled_flow/out/020_dg_0_01100_f_0_2_filt.tex}%
  	}%
\caption{The flow for the formose reaction from Fig.~\ref{fig:scaled_realisable_flow_formose_id} but with 
molecular structures. It is not realisable but is scaled-realisable by a factor $2$. The input compound Formald is marked with green and Glycoald which is both an input and output compound is marked with turquoise.}
\label{fig:scaled_realisable_flow_formose_no_id}
\end{figure}

\begin{figure}[tbp]
\centering
\resizebox{\textwidth}{!}{%
  \renewcommand\modInputPrefix{data/formose_scaled_cert}%
  	\input{data/formose_scaled_cert/out/021_dg_0_01100.tex}%
  	}%
\caption{The realisability certificate from Fig.~\ref{fig:scaled_realisable_DAG_formose_id} but with the structure of the molecules. The input compounds are marked with green and the output compounds are marked with blue.}
\label{fig:scaled_realisable_DAG_formose_no_id}
\end{figure}

\clearpage

\subsection*{Borrow-Realisable Pentose Phosphate Pathway with Molecular Structures}
In Fig.~\ref{fig:borrow_flow_PPP_no_id} we show the borrow-realisable flow for the PPP from Fig.~\ref{fig:borrow_realisable_flow_PPP} with molecular structures. In Fig.~\ref{fig:borrow_cert_PPP_no_id} we present a certificate for said flow, also with the structure of the molecules. It is the same certificate as the one from Fig.~\ref{fig:borrow_realisable_DAG_PPP}.

\begin{figure}[h]
\centering
\resizebox{0.85\textwidth}{!}{%
  \renewcommand\modInputPrefix{data/ppp_borrow_flow_mol}%
  	\input{data/ppp_borrow_flow_mol/out/022_dg_0_01100_f_0_2_filt.tex}%
  	}%
\caption{The flow for the PPP from Fig.~\ref{fig:borrow_realisable_flow_PPP} but with 
the structure of the molecules. It is not realisable as is but is borrow-realisable. The input compound is marked with green and the output compound is marked with blue.}
\label{fig:borrow_flow_PPP_no_id}
\end{figure}

\begin{figure}[tbp]
\centering
\resizebox{0.95\textwidth}{!}{%
  \renewcommand\modInputPrefix{data/ppp_borrow_cert_mol}%
  	\input{data/ppp_borrow_cert_mol/out/027_dg_0_01100.tex}%
  	}%
\caption{A realisability certificate for the flow in Fig.~\ref{fig:borrow_flow_PPP_no_id} where the compound Glyald is borrowed. It corresponds to the realisability certificate in Fig.~\ref{fig:borrow_realisable_DAG_PPP} but here the molecular structures are visible. The input compounds are marked with green, the output compounds are marked with blue and the borrowed compound is marked with purple.}
\label{fig:borrow_cert_PPP_no_id}
\end{figure}

\clearpage

\subsection*{Realisable Pentose Phosphate Pathway with Molecular structures}
Here we present the flow that depicts a realisable pentose 
phosphate pathway (PPP) from Fig.~\ref{fig:realisable_flow_PPP_id}, but with molecular structures, 
in Fig.~\ref{fig:realisable_flow_PPP} as well 
as a realisability certificate for it, also with molecular structures, which proves its realisability in 
Fig.~\ref{fig:realisable_DAG_PPP}. 

\begin{figure}[h]
  \centering
  \resizebox{0.72\textwidth}{!}{%
  \renewcommand\modInputPrefix{data/ppp_realisable_flow}%
  	\input{data/ppp_realisable_flow/out/022_dg_0_01100_f_0_0_filt.tex}%
  	}%
  \caption{An example of a flow for the pentose phosphate pathway which is realisable. The input compound is marked with green and the output compound is marked with blue.}
  \label{fig:realisable_flow_PPP}
\end{figure}

\begin{figure}[tbp]
  \centering
  \resizebox{0.86\textwidth}{!}{%
  \renewcommand\modInputPrefix{data/ppp_realisable_cert}%
  	\input{data/ppp_realisable_cert/out/023_dg_0_01100.tex}%
  	}%
  \caption{A realisability certificate for the realisable pentose phosphate pathway from Fig.~\ref{fig:realisable_flow_PPP}. The input compounds are marked with green and the output compounds are marked with blue.}
  \label{fig:realisable_DAG_PPP}
\end{figure}
\end{document}